\pgfplotsset{compat=1.10}
\renewcommand\@biblabel[1]{}
\newtheorem{claim}{Claim}
\newtheorem{lemma}{Lemma}
\newtheorem{assumption}{Assumption}
\newtheorem{definition}{Definition}
\newmdtheoremenv[style=myenvs]{prop}{Proposition}
\newcolumntype{L}[1]{>{\raggedright\let\newline\\\arraybackslash\hspace{0pt}}m{#1}}
\newcolumntype{C}[1]{>{\centering\let\newline\\\arraybackslash\hspace{0pt}}m{#1}}
\newcolumntype{R}[1]{>{\raggedleft\let\newline\\\arraybackslash\hspace{0pt}}m{#1}}
\newlist{enumdescript}{description}{1}
\setlist[enumdescript,1]{%
  before={\setcounter{descriptcount}{0}%
          \renewcommand*\thedescriptcount{\arabic{descriptcount}}},
        font={\bfseries\stepcounter{descriptcount}Application \thedescriptcount:~}
}
\newcounter{descriptcount}
\newlength{\hfoot}
\newlength{\vfoot}
\else\setlength{\hfoot}{-\topmargin}%
\ifodd\value{page}\setlength{\vfoot}{\oddsidemargin}%
\else\setlength{\vfoot}{\evensidemargin}\fi%
\raisebox{\hfoot}[0pt][0pt]{\rlap{\hspace{\vfoot}\rotatebox[origin=cB]{90}{\thepage}}}\fi}
\title{\singlespacing{\textbf{Reputational Algorithm Aversion}}\thanks{McGill University. Email: gregory.weitzner@mcgill.ca. I thank Jeremy Bertomeu, Adolfo De Motta, Vincent Glode and seminar participants at the Columbia/RFS AI in Finance Conference for the helpful comments and discussions.}}
\author{Gregory Weitzner}
\date{July 2024}
\begin{document}
\maketitle

\begin{abstract}
\begin{small}
\begin{singlespace}
\noindent People are often reluctant to incorporate information produced by algorithms into their decisions, a phenomenon called ``algorithm aversion''. This paper shows how algorithm aversion arises when following or overriding an algorithm conveys information about a human's ability. In my model, workers make forecasts of an uncertain outcome based on their private information and an algorithm's signal. Although it is efficient for low-skill workers to follow the algorithm, they sometimes override it due to reputational concerns. The model provides a fully rational microfoundation for algorithm aversion that aligns with the broad concern that AI systems will displace many types of workers.

\end{singlespace}
\end{small}
\end{abstract}

\clearpage

\section{Introduction}

In recent years, algorithms have become better at forecasting many types of outcomes than humans, raising concerns that artificial intelligence will cause the mass displacement of jobs.\footnote{I follow \cite{ludwig2021fragile} by using the phrase algorithm interchangably with ``artificial intelligence'', ``machine learning'' or ``deep learning''. An incomplete list of papers that show algorithms outperform humans across many domains include: \cite{yeomans2019making}, \cite{lai2021towards}, \cite{mullainathan2019machine}, \cite{liu2022assessing} and \cite{agarwal2023combining}. } However, many argue that these concerns are overstated: instead of being displaced by algorithms, humans will play a critical role in collaborating with and/or supervising algorithms.\footnote{E.g., \cite{langlotz2019will}, \cite{agrawal2019artificial}, \cite{ludwig2021fragile} \cite{acemoglu2023can}. For example, \cite{ludwig2021fragile} argue that in many contexts, humans can supplement algorithm's because they have better context/can better assess rare situations by using intuition.} Based on this idea, a rapidly growing literature across computer science, economics and psychology explores how humans work alongside artificial intelligence systems.

However, despite the clear value of algorithms, humans are often reluctant to incorporate algorithmic forecasts into their own decisions, a phenomenon which \cite{dietvorst2015algorithm} coin ``algorithm aversion''. Algorithm aversion has been documented empirically in domains as disparate as health (e.g., \cite{promberger2006patients}, \cite{longoni2019resistance}, \cite{shaffer2013patients} and \cite{agarwal2023combining}), finance (e.g., \cite{onkal2009relative}, \cite{niszczota2020robo} and \cite{greig2023algorithm}) and judicial decision making (e.g., \cite{angelova2023algorithmic}). In some cases, even after being provided the algorithm's forecast, humans' forecasts perform worse than the algorithm on its own. At first blush, this behavior is surprising. If humans act as rational Bayesians, their forecasts should always be at least as accurate as an algorithm's. For this reason, the main explanations for algorithm aversion center on behavioral biases and psychological reasons.\footnote{See \cite{burton2020systematic}, \cite{jussupow2020we} and \cite{mahmud2022influences} for review articles discussing the various explanations for algorithm aversion.} Given the large potential productivity benefits of AI-human collaboration, it is critical to understand why exactly humans are so reluctant to incorporate valuable information from algorithms into their own decision making. Moreover, understanding the reasons for algorithm aversion can also help in design more efficient systems in which humans work alongside algorithms. 

In this paper, I argue that a simple, yet overlooked, reason for algorithm aversion is that a human's choice to follow or override an algorithm's forecast may convey information about that human's ability. To illustrate this idea, consider a situation in there is worker forecasting a binary outcome with the help of an algorithm provided by the firm. After observing the algorithm's forecast, the worker makes his own forecast and the outcome is realized. The worker has private information about his skill, where he can either be high-skill, in which case he is better at predicting the outcome than the algorithm, or he can be low-skill, in which case he is worse than the algorithm. Hence, it is efficient for the low-skill worker to report the algorithm's forecast, while the high-skill worker report his own information. This paper shows that the efficient use of the algorithm can never be achieved. Why? If the algorithm is used efficiently, only the high-skill worker ever overrides the algorithm. However, if this were the case, the worker would always override the algorithm to convince the firm they are high-skill. Hence, in equilibrium the low-skill worker must at least sometimes inefficiently override the algorithm. Put in more simple terms, in order for a worker to convince his boss that he is worth keeping around, he needs to sometimes override the algorithm, even if doing so is the wrong decision for the company.

Formally, I analyze a strategic communications, i.e., cheap talk, game in which a worker is tasked with predicted an uncertain outcome. The worker receives a private signal as well as a public signal produced by an algorithm. After receiving both signals, the worker reports a forecast, after which the outcome is realized. Importantly, the worker has private information about his own skill. Specifically, the worker can either be low-skill, in which case his information is always inferior to the algorithm, or he can be high-skill in which case his information is always superior to the algorithm. The firm prefers the worker report an accurate forecast;  hence, from the firm's perspective, it is efficient for the low-skill worker to report the algorithm's forecast, while the high-skill worker always reports his own signal. After receiving the worker's forecast and the realization of the outcome, the firm updates its beliefs regarding the worker's skill. Importantly, the worker faces reputational concerns (e.g., \cite{holmstrom1999managerial}) and hence reports a forecast which maximizes the firm's perception that he is high-skill. 

I first consider a benchmark case with no algorithm. In this benchmark, there always exists an efficient equilibrium in which the worker honestly reports his signal. Intuitively, the firm assesses the worker entirely on the accuracy of his forecast which incentivizes the worker to report his signal truthfully in order for his forecast to be perceived as accurate as possible. In the main model in which the worker has access to the algorithm's signal, this is no longer the case. Here it is efficient for the low-skill worker to report the algorithm's forecast, while the high-skill worker should always report his own signal. However, whereas in the benchmark case the forecast on its own provided no information regarding the worker's skill, here the firm learns both from the accuracy of the worker's forecast and the forecast itself when compared to the algorithm's signal. In fact, because of this latter effect the first-best use of information can never be achieved. The reason behind this result is simple: if the high-skill worker is the only type to ever override the algorithm, overriding the algorithm will instantly reveal that the worker is high-skill. Hence, under this set of beliefs, both worker types would always deviate by reporting the opposite signal of the algorithm. 

I next show that there exists an informative equilibrium in which the high-skill worker always reports his signal and the low-skill worker reports his signal with some positive probability when his signal differs from the algorithm. In this equilibrium, the worker exhibits ``algorithm aversion'', i.e., he overrides the algorithm even when his own signal is less accurate than the algorithm. Hence, algorithm aversion causes the overall accuracy of the worker's forecast to be strictly lower than the first-best. In fact, in some cases the accuracy of the worker's forecast is worse than the algorithm alone and the worker provides at best zero static benefits to the firm. 

I next show that algorithm aversion increases in the uncertainty of the algorithm. Intuitively, as the algorithm becomes more uncertain it becomes less costly to override it because its signal is less accurate. However, there is also an interesting second effect. Because the algorithm is less accurate, and the high-skill worker always reports his own signal, the high-skill worker's signal differs from the algorithm more often and hence he defies the algorithm more frequently in equilibrium. This increases the signaling value of overriding the algorithm for the low-skill worker. Hence, this indirect effect causes an even larger increase in algorithm aversion. 

Beyond generating a rational microfoundation for algorithm aversion, the model's predictions are consistent with substantial empirical evidence. First, humans' forecasts are often less accurate than an algorithm's even after being given access to the algorithm's forecast (e.g., \cite{angelova2023algorithmic} and \cite{agarwal2023combining}), which can occur enogenously in my model for completely rational reputational reasons. Relatedly, \cite{angelova2023algorithmic} analyze judges' discretion over algorithm and find that 90\% of judges underperform an algorithm when they override it's recommendation. In my model, the worker overrides the algorithm too frequently and the accuracy of his forecast in the case of an override is often lower than that of the algorithm. While part of \cite{angelova2023algorithmic}'s results could be due to judges simply making mistakes, my model shows how override decisions can also be driven by reputational concerns. If a worker never override an algorithm, it can be difficult for them to justify keeping their job.

The main mechanism of my model is that by relying on an algorithmic recommendation, a human conveys negative information about their ability. Consistent with this prediction, \cite{shaffer2013patients} show that patients rate doctors using computerized recommendations lower than those that do not rely on computerized recommendations.\footnote{\cite{leyer2019me} show that humans are less likely to delegate decisions to an AI system versus another human. Although there is no delegation decision in my model, if such a delegation decision reflects negatively on the human, my model would also predict a reluctance to do so.} While data about the use of algorithms within firms is limited, the model would predict that the same type of dynamic would exist for workers relying on algorithmic forecasts within firms. 

To my knowledge this is the first model providing a mechanism for algorithm aversion which is consistent with rational Bayesian behavior. A non-exhaustive list of other explanations include i) the desire to understand forecasts (e.g., \cite{yeomans2019making}), ii) valuing agency of decisions (e.g., \cite{sunstein2023use}),  iii) humans having incorrect priors (e.g., \cite{yeomans2019making}, \cite{greig2023algorithm}) and iv) humans deriving utility from interacting with another human and/or disutility from interacting with a machine
(e.g., \cite{dietvorst2015algorithm} and \cite{greig2023algorithm}).\footnote{See \cite{jussupow2020we}, \cite{burton2020systematic} and \cite{mahmud2022influences} for review articles of algorithm aversion.} While I do not doubt that psychological and behavioral reasons may partially explain this phenomenon, I argue that a simple, yet missing part of the story is the reputational concerns of the people using these algorithms. More broadly, by highlighting an impediment of humans working alongside AI systems, this paper contributes to the literature on human-AI collaboration.\footnote{See \cite{lai2021towards} for an excellent review of this literature.} 

In the model, algorithm aversion makes the worker's forecast less accurate and in some cases the worker provides zero value to the firm in the short-run. In practice, this can put downward pressure on wages for humans. However, if firms are are impatient and it is costly to retain/train workers, reputational algorithm aversion could result in firms firing workers entirely. For these reasons, this paper relates to the literature analyzing the use of AI within firms and in AI's affect on labor market outcomes (e.g., \cite{acemoglu2018artificial}, \cite{acemoglu2020wrong}, \cite{acemoglu2022artificial},  \cite{cao2023talk}, \cite{babina2024artificial}, \cite{jha2024chatgpt} and \cite{eisfeldt2023generative}).

Finally, from a theory perspective, the model follows the reputational cheap-talk framework (e.g., \cite{scharfstein1990herd}, \cite{ottaviani2006professional} \cite{ottaviani2006reputational},  \cite{ottaviani2006strategy}, \cite{guembel2009reputational}) which combines cheap talk (\citealp{crawford1982strategic}) with career concerns (\citealp{holmstrom1999managerial}). Like other reputational cheap talk models, agents make strategic forecasts in order to maximize the likelihood that they are perceived as skilled. However, there are two main differences between my model and typical reputational cheap-talk models. First, in my model the worker knows his type, whereas typically the sender's type is unknown to himself. Second, the worker has access to both a private signal and a public signal (i.e., the algorithm). These two ingredients cause the worker's forecast itself to convey information about his ability beyond the accuracy of the forecast, which results in workers placing too little weight on the public signal due to reputational concerns. 

\section{Model Setup}
There is a worker who is tasked by his manager to predict an outcome regarding a state of the world $\omega \in \{\omega_0,\omega_1\}$. The worker is of type $\theta \in \{\theta_L,\theta_H\}$ (low-skill or high-skill) and receives a private signal $s \in \{s_0,s_1\}$ regarding $\omega$, where both the worker's type and signal are only known by the worker. The worker is also given access to an algorithm developed by the firm, which produces a signal $a \in \{a_0,a_1\}$ regarding $\omega$, where the algorithm's signal is observable to both the worker and the manager. I assume the algorithm is a ``black-box'' which obtains information regarding $\omega$ by analyzing a set of training data which I do not explicitly model.  

The worker observes both signals then reports a message $m \in \{m_0,m_1\}$, i.e., forecast, to the manager, where $m = m_0$ means (in equilibrium) that the worker predicts that $\omega = \omega_0$.\footnote{Because this is a cheap-talk model, the messages only have meaning to the extent they convey information to the manager in equilibrium.} After the worker reports his message the state $\omega$ is realized and observed by all. 

Because the worker's actions are costless, the model applies to any situation in which the worker's action serves as a forecast, potentially conveying information about his skill. For example, it could be the worker is forecasting the firm's earnings or cash flows, but could also be be that the worker is making an investment decision (e.g., \cite{scharfstein1990herd}).\footnote{In \cite{scharfstein1990herd} the outcome of the investment decision is observable whether or not the firm invests. Hence, this interpretation would be most relevant to situations in which the the firm can tell whether the action would have succeeded regardless of whether the action was taken, e.g., if the investment outcome is mostly tied to market-wide conditions. }

\subsection{Information Structure}
The worker is equally likely to be low-skill or high-skill, i.e., $Pr(\theta_H) = Pr(\theta_L) = \frac{1}{2}$ and both states of the world are equally likely, $Pr(\omega_1) = Pr(\omega_0) = \frac{1}{2}$. When $\omega = \omega_1$, the high-skill worker receives the signal $s_1$ with probability $\upsilon_H$, while the low-skill worker receives it with probability $\upsilon_L < \upsilon_H$. When $\omega = \omega_0$, the high-skill worker receives the signal $s_1$ with probability $1-\upsilon_H$ and the low-skill worker receives it with probability $1-\upsilon_L$:
\begin{align}
Pr(s_1|\omega_1, \theta_L) &\equiv \upsilon_L > \frac{1}{2}, \\ 
Pr(s_1|\omega_1, \theta_H) &\equiv \upsilon_H > \upsilon_L, \\
Pr(s_1|\omega_0, \theta_L) &\equiv 1-\upsilon_L, \\
Pr(s_1|\omega_0, \theta_H) &\equiv 1-\upsilon_H.
\end{align}
These assumptions make it such that the high-skill worker always has more accurate information than the low-skill worker and that the ex-ante distribution of signals is the same for both low-skill and high-skill workers, i.e.,  $Pr(s_1| \theta_H) =  Pr(s_1| \theta_L) = \frac{1}{2}$. The latter simplifies the problem by ensuring that the signal itself does not convey any information about the skill of the worker.\footnote{A similar assumption is made in \cite{scharfstein1990herd}.} 

When $\omega = \omega_1$, the algorithm produces the signal $a_1$ from the algorithm with probability $\alpha$, while if $\omega = \omega_0$ the algorithm produces signal $a_1$ with probability $1-\alpha$. Importantly, the algorithm's recommendation does not depend on the worker's skill. This assumption is reasonable so long as the algorithm is developed by other employees of the firm or outside the firm entirely. In practice, different firms may even use the same algorithm developed by a third party.\footnote{For example,  firms often use the same large language models, e.g., ChatGPT.} 

To make the problem interesting, I assume that the high-skill worker's signal is more informative than the algorithm, while the low-skill worker's is less informative.
\begin{assumption}\label{a1}
the high-skill worker's signal is more informative than the algorithm, while the low-skill worker's is less informative, i.e., $\alpha \in (\upsilon_L, \upsilon_H)$.
\end{assumption}
Without this assumption, it would always be optimal for the worker to either always report the algorithm's signal or always report his own signal. In practice, the ``ground truth'' in algorithms' training data is often established based on ``expert'' assessments. Similarly, \cite{angelova2023algorithmic} find that 10\% of judges outperform an algorithm in the context of bail decisions. Hence, it is natural that at least some workers can forecast better than the algorithm. Moreover, if the algorithm always provided the better forecast, there would be no need to use the worker's information at all.

\subsection{Worker's Objective and Equilibrium Concept}
I assume that the manager always prefers that the worker's forecast is correct but do not explicitly model why this is valuable to the manager.\footnote{This is standard in reputational cheap talk models (e.g., \cite{scharfstein1990herd}, \cite{ottaviani2006professional}, \cite{ottaviani2006reputational}, \cite{ottaviani2006strategy}.} For example, the firm may prefer that its treasurer make more accurate cash flow estimates to help guide firms' financial policies. Relatedly, the firm would prefer to make investments when market conditions turn out to be favorable for that investment. 

The manager first observes the worker's forecast $m$ and the algorithm's signal $a$, after which the outcome $\omega$ is realized. The manager then forms a posterior belief $\hat{\theta}(m,a,\omega)$ on the probability that the worker is high-skill.\footnote{We can equivalently think of the manager as anyone who observes both the worker and the algorithm's recommendation. If the worker and algorithm's forecast are observable to other firms, then we think of the manager as the entire labor market (e.g., \cite{holmstrom1999managerial}).}

Through assessing the worker's forecast and the accuracy of the forecast in predicting the outcome, the manager can learn about the worker's skill. As is typical in career concerns models, there are no long-term contracts and hence, the worker attempts to maximize the manager's perceived probability that the worker is high-skill. This is equivalent to a risk-neutral worker maximizing his next period, competitive spot wage, where the wage is linear in the manager's posterior belief that the worker is high-skill.\footnote{This approach is common in reputational cheap talk models (e.g., \cite{scharfstein1990herd}).} Hence, the worker's payoff following his own signal $s$, the algorithm's signal $s$ and the worker's message $m$ as follows:
\begin{gather}
     \sum_\omega   Pr(\omega | s, \theta)  \hat{\theta}(m, a, \omega).
\end{gather}
The equilibrium concept I use throughout the analysis is perfect Bayesian Nash Equilibrium, in which the worker's equilibrium strategy maximizes his expected reputational payoff and the manager's conditional probabilistic belief is correct on equilibrium path, i.e., $\hat{\theta}(m,a,\omega) = Pr(\theta_H|m,a,\omega)$ for $\omega \in \{\omega_0, \omega_1\}$, $a \in \{a_0, a_1 \}$ and any $m$ that is used in equilibrium.

\section{Benchmark: No Algorithm}
First, it will be useful to establish a benchmark in which the worker does not have access to the algorithm's signal to highlight the mechanism throughwhich algorithm aversion arises. Because the manager always prefers a more accurate forecast, it is efficient for the worker to report his signal truthfully. For this analysis, it will be useful to calculate the worker's posterior belief regarding the likelihood of state $\omega$:
\begin{align}
Pr(\omega_1 | s_1, \theta_H) = \frac{Pr(s_1 | \omega_1, \theta_H)Pr(\omega_1|\theta_H)}{Pr(s_1 | \theta_H)} &= \upsilon_H,\\
Pr(\omega_1 | s_0, \theta_H) = \frac{Pr(s_0 | \omega_1, \theta_H)Pr(\omega_1|\theta_H)}{Pr(s_0 | \theta_H)} &=1-\upsilon_H, \\
Pr(\omega_1 | s_1, \theta_L) = \frac{Pr(s_1 | \omega_1, \theta_L)Pr(\omega_1|\theta_L)}{Pr(s_1 | \theta_L)} &= \upsilon_L,\\
Pr(\omega_1 | s_0, \theta_L) = \frac{Pr(s_0 | \omega_1, \theta_L)Pr(\omega_1|\theta_L)}{Pr(s_0 | \theta_L)} &=1-\upsilon_L. 
\end{align}
It will also be useful to calculate the manager's posterior beliefs assuming that the worker engages in truth-telling:
\begin{align}
    \hat{\theta}(\omega_1, s_1) &= \frac{\upsilon_H}{\upsilon_H + \upsilon_L}, \\
    \hat{\theta}(\omega_1, s_0) &= \frac{1-\upsilon_H}{2-\upsilon_H - \upsilon_L}, \\
        \hat{\theta}(\omega_0, s_1) &= \frac{1-\upsilon_H}{2-\upsilon_H - \upsilon_L}, \\
    \hat{\theta}(\omega_0, s_0) &= \frac{\upsilon_H}{\upsilon_H + \upsilon_L}.
\end{align}
In order for the low-skill and high-skill workers to truthfully report $m_1$ when $s=s_1$, the following incentive compatibility conditions must hold: 
\begin{align}
  \sum_\omega   Pr(\omega | s_1, \theta_L)  \hat{\theta}(s_1, \omega) &\geq  \sum_\omega   Pr(\omega | s_1, \theta_L)  \hat{\theta}(s_0, \omega) \label{benchmark1} \\ 
  \sum_\omega   Pr(\omega | s_1, \theta_H)  \hat{\theta}(s_1, \omega) &\geq  \sum_\omega   Pr(\omega | s_1, \theta_H)  \hat{\theta}(s_0, \omega). \label{benchmark2}
\end{align}
I next show that in this benchmark case, there always exists an equilibrium in which the worker reports his signal as his forecast. 
\begin{prop}
In the benchmark case without the algorithm, a truth-telling equilibrium always exists in which both the low-skill worker and high-skill worker truthfully report their signal, i.e., $m = m_0$ if $s = s_0$ and $m = m_1$ if $s = s_1$.
\end{prop}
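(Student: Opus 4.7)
The plan is to construct the candidate equilibrium in which both types set $m=s$, verify that the manager's posteriors on the equilibrium path are exactly the four quantities $\hat{\theta}(\omega, s)$ already displayed in the excerpt, and then check the two incentive compatibility conditions \eqref{benchmark1}--\eqref{benchmark2} (plus their $s_0$ counterparts) by direct substitution. The off-path posteriors after any message $m$ never used are not needed, since under the candidate strategy every message is used with positive probability for each $(\omega,a)$ pair (here just $\omega$).

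The key observation that makes the IC conditions transparent is a monotonicity property of the posteriors. Writing $p \equiv \hat{\theta}(s_1,\omega_1)=\hat{\theta}(s_0,\omega_0)=\tfrac{\upsilon_H}{\upsilon_H+\upsilon_L}$ and $q \equiv \hat{\theta}(s_1,\omega_0)=\hat{\theta}(s_0,\omega_1)=\tfrac{1-\upsilon_H}{2-\upsilon_H-\upsilon_L}$, one immediately checks that $p>q$ because $\upsilon_H>\upsilon_L$. So the manager rewards a forecast that matches the realized state. I would then rewrite the deviation gain for a worker of type $\theta\in\{\theta_L,\theta_H\}$ who has seen $s_1$ but considers reporting $m_0$ as
\[
\sum_\omega Pr(\omega\mid s_1,\theta)\bigl[\hat{\theta}(s_1,\omega)-\hat{\theta}(s_0,\omega)\bigr]
= \bigl(2\upsilon_\theta-1\bigr)(p-q),
\]
using $Pr(\omega_1\mid s_1,\theta)=\upsilon_\theta$ and $Pr(\omega_0\mid s_1,\theta)=1-\upsilon_\theta$. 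Since $\upsilon_L,\upsilon_H>\tfrac12$ and $p>q$, this expression is strictly positive for both types, so neither wants to deviate after $s_1$. The analogous computation for $s_0$ is identical by the symmetry $(\omega_1,s_1)\leftrightarrow(\omega_0,s_0)$ and $(\omega_1,s_0)\leftrightarrow(\omega_0,s_1)$ built into the posteriors.

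There is essentially no real obstacle here: the proposition is the easy benchmark whose only content is that, absent the algorithm, the forecast reveals nothing about $\theta$ other than through its accuracy, and a more informative signal delivers a more accurate forecast. The only thing to be careful about is making sure that the posteriors $\hat{\theta}(m,\omega)$ listed in the excerpt are indeed the Bayes-consistent beliefs under the candidate strategy (i.e., $m=s$ implies $Pr(m_1\mid\omega_1,\theta)=\upsilon_\theta$, and so on), so that the equilibrium is genuinely a PBE and not merely an incentive-compatible outcome. I would verify this in one short line before turning to the IC algebra above.
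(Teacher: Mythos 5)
Your proposal is correct and follows essentially the same route as the paper: verify the incentive compatibility conditions \eqref{benchmark1}--\eqref{benchmark2} by substituting the truth-telling posteriors and checking the sign, invoking symmetry for the $s_0$ case. Your factorization of the deviation gain as $(2\upsilon_\theta-1)(p-q)$ is just a cleaner presentation of the same algebra --- indeed $p-q=\frac{\upsilon_H-\upsilon_L}{(\upsilon_H+\upsilon_L)(2-\upsilon_H-\upsilon_L)}$, so your expression reproduces the paper's displayed quantities exactly.
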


\begin{proof}
Because the problem is symmetric, the incentives to deviate will always be the same regardless of the signal the worker receives. Hence, it is without loss of generality to consider the incentives to deviate following signal $s_1$. Plugging in the posteriors calculated above into \eqref{benchmark1} and simplifying we have:
    \begin{align}
\frac{\left(\upsilon _H-\upsilon _L\right) \left(2 \upsilon
   _L-1\right)}{\left(2-\upsilon _H-\upsilon _L\right)
   \left(\upsilon _H+\upsilon _L\right)},
    \end{align}
    which is positive given that $\upsilon_H > \upsilon_L$ and $\upsilon_L > \frac{1}{2}$. Hence, the low-skill worker will not deviate. Plugging in the posteriors into \eqref{benchmark1} and simplifying we have:
        \begin{align}
\frac{\left(\upsilon _H-\upsilon _L\right) \left(2 \upsilon
   _H-1\right)}{\left(2-\upsilon _H-\upsilon _L\right)
   \left(\upsilon _H+\upsilon _L\right)},
    \end{align} 
    which is also clearly positive. Hence, the high-skill worker will also not deviate from truth-telling. 
\end{proof}
Intuitively, because the worker's forecast itself does not convey any information about the worker's ability, the worker can only convey his skill by making as accurate of a forecast as possible. Hence, it is incentive compatible for the worker to report his signal truthfully if the manager expects him too. However, once I introduce the algorithm, the worker's forecast will convey information about his skill because it can be compared to the algorithm's signal. 

It is worth mentioning that as in all cheap talk models, there also exists a ``babbling'' equilibrium, in which both high and low-skill workers report a forecast $m$ with the same probability regardless of their type and the signal they receive. In this case, the manager cannot glean any information from the worker's message and hence, the worker is indifferent between any potential message he sends.

\section{Algorithm}
Next, I analyze the case in which the worker has access to the algorithm before making his forecast. Because the problem is symmetric, throughout the analysis I focus on the case in which the worker receives signal $s_1$; however, the analysis is the same for the case in which $s = s_0$. Again, it will be useful to calculate the worker's posterior belief regarding the likelihood of the state:
\begin{align}
 Pr(\omega_1 | s_1, a_1, \theta_L) = \frac{Pr(s_1, a_1 | \omega_1, \theta_L)Pr(\omega_1| \theta_L)}{Pr(s_1, a_1 | \theta_L)} &= \frac{\alpha \upsilon_L}{\alpha \upsilon_L + (1-\alpha)(1-\upsilon_L)}, \label{post1}  \\ 
  Pr(\omega_1 | s_1, a_1, \theta_H) = \frac{Pr(s_1, a_1 | \omega_1, \theta_H)Pr(\omega_1| \theta_H)}{Pr(s_1, a_1 | \theta_H)} &= \frac{\alpha \upsilon_H}{\alpha \upsilon_H + (1-\alpha)(1-\upsilon_H)},  \label{post2} \\ 
  Pr(\omega_1 | s_1, a_0, \theta_L) = \frac{Pr(s_1, a_0 | \omega_1, \theta_L)Pr(\omega_1| \theta_L)}{Pr(s_1, a_0 | \theta_L)} &= \frac{(1-\alpha) \upsilon_L}{(1-\alpha) \upsilon_L + \alpha(1-\upsilon_L)},  \label{post3} \\
Pr(\omega_1 | s_1, a_0, \theta_H) = \frac{Pr(s_1, a_0 | \omega_1, \theta_H)Pr(\omega_1| \theta_H)}{Pr(s_1, a_0 | \theta_H)} &= \frac{(1-\alpha) \upsilon_H}{(1-\alpha) \upsilon_H + \alpha(1-\upsilon_H)}. \label{post4}
\end{align}
Based on these posterior probabilities, the manager would always prefer that the low-skill worker report the algorithm's forecast rather than his own signal, while the high-skill worker reports his own signal. To see this, first note that from Assumption \ref{a1}, $\upsilon_H > \alpha > \upsilon_L > \frac{1}{2}$. Hence, when the worker's signal coincides with that of the algorithm, the posterior probability that the outcome coincides with the worker's signal is always greater than one half, i.e., $Pr(\omega_1 | s_1, a_1, \theta) > \frac{1}{2}$ for $\theta \in \{\theta_L, \theta_H \}$. However, when the worker's signal differs from the algorithm, the high-skill worker's posterior probability that the outcome coincides with the signal is greater than one-half, i.e., $Pr(\omega_1 | s_1, a_0, \theta_H) > \frac{1}{2}$, while the low skill's worker is less than one-half, i.e., $Pr(\omega_1 | s_1, a_0, \theta_L) < \frac{1}{2}$. Hence, for the forecast to be as accurate as possible, the manager always prefers that the high-skill worker report his own signal, while the low-skill worker always report the algorithm's signal.  

Henceforth, I refer to the case in which the worker reports the algorithm's signal when he is low-skill and his own signal when he is high-skill the ``first-best''. I next show that because of reputational concerns, the first-best can never be achieved as an equilibrium. 
\begin{prop}\label{propfb}
    The first-best cannot be achieved as an equilibrium. 
\end{prop}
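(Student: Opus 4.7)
The plan is to argue by contradiction: I would suppose that the first-best strategy profile --- in which the low-skill worker always reports $m=a$ and the high-skill worker always reports $m=s$ --- is sustained in equilibrium, and then exhibit a strictly profitable deviation for the low-skill worker, thereby collapsing the candidate equilibrium.

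First I would pin down the manager's posterior on the event $\{m\ne a\}$. Under the conjectured profile, the only type who ever overrides the algorithm is the high-skill worker, and he does so exactly when his signal disagrees with the algorithm. By Assumption \ref{a1}, $\alpha<\upsilon_H<1$, so the event $\{s\ne a,\theta=\theta_H\}$ has strictly positive probability; thus an override lies on the equilibrium path and Bayes' rule applies directly, forcing $\hat\theta(m_1,a_0,\omega)=\hat\theta(m_0,a_1,\omega)=1$ for every $\omega\in\{\omega_0,\omega_1\}$.

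By the symmetry of the information structure, it suffices to analyze a low-skill worker who has just observed $(s_1,a_0)$. Following the candidate strategy requires him to report $m_0$, which yields expected reputation $\sum_\omega Pr(\omega\mid s_1,a_0,\theta_L)\,\hat\theta(m_0,a_0,\omega)$. I would then note that the observable pair $(m_0,a_0)$ is generated both by the high-skill type with $s_0$ and by the low-skill type with either realization of $s$, so $\hat\theta(m_0,a_0,\omega)<1$ for each $\omega$, and consequently the candidate payoff is strictly less than one. Deviating to $m_1$ instead delivers $\sum_\omega Pr(\omega\mid s_1,a_0,\theta_L)\cdot 1=1$, the maximum possible reputational payoff. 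The deviation is therefore strictly profitable, contradicting equilibrium.

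The only step that requires any care is verifying that an override lies on the equilibrium path --- so that the posterior of one is forced by Bayes' rule and is not merely an off-path belief choice --- but this is immediate from $\upsilon_H\in(\alpha,1)$, which guarantees that the high-skill worker's signal disagrees with the algorithm with strictly positive probability. The remaining comparisons are purely mechanical, so I do not anticipate a substantive technical obstacle.
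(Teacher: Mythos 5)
Your proposal is correct and follows essentially the same route as the paper: under the first-best only the high-skill worker ever overrides, so (on-path) Bayes' rule forces the override posterior to one, and the low-skill worker therefore strictly prefers to deviate to overriding since following the algorithm yields a posterior strictly below one. The paper additionally writes out the explicit posteriors and checks both incentive constraints of the low-skill worker, but your single deviation at $(s_1,a_0)$ suffices, and your explicit verification that the override event is on-path is a point the paper leaves implicit.
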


\begin{proof}
First, it is useful to calculate the following posterior beliefs given that the manager expects the worker to follow the first-best use of information: 
\begin{align}
    \hat{\theta}^{FB}( m_1, a_1, \omega_1) &= \frac{\upsilon_H}{\upsilon_H + 1},\\
    \hat{\theta}^{FB}( m_1, a_1, \omega_0) &= \frac{1-\upsilon_H}{1-\upsilon_H + 1},\\
    \hat{\theta}^{FB}( m_0, a_1, \omega_1) &= 1,\\
    \hat{\theta}^{FB}( m_1, a_1, \omega) &= 1.
\end{align}
After receiving signal $s_1$, incentive compatibility for the low-skill worker requires: 
\begin{align}\label{ica}
   \sum_\omega  Pr(\omega | s_1, a_1, \theta_L)  \hat{\theta}^{FB}( m_1, a_1, \omega) &\geq  \sum_\omega  Pr(\omega | s_1, a_1, \theta_L)  \hat{\theta}^{FB}( m_0, a_1, \omega), \\
      \sum_\omega  Pr(\omega | s_1, a_0, \theta_L)  \hat{\theta}^{FB}( m_0, a_0, \omega) &\geq  \sum_\omega  Pr(\omega | s_1, a_0, \theta_L)  \hat{\theta}^{FB}( m_1, a_0, \omega), 
\end{align}
where \eqref{ica} says that when the low-skill worker receives the signal $s_1$ he should report $m_1$ when the algorithm's signal is $a_1$ and $m_0$ when the algorithm's signal is $a_0$. Plugging in the first-best posterior beliefs we have:
\begin{align}\label{ica2}
   \sum_\omega  Pr(\omega | s_1, a_1, \theta_L)  \frac{\upsilon_H}{\upsilon_H + 1} &\geq  \sum_\omega  Pr(\omega | s_1, a_1, \theta_L), \\
       \sum_\omega  Pr(\omega | s_1, a_0, \theta_L)  \frac{1-\upsilon_H}{1-\upsilon_H + 1} &\geq  \sum_\omega  Pr(\omega | s_1, a_0, \theta_L).
\end{align}
Since, $\frac{\upsilon_H}{\upsilon_H + 1}$ and $\frac{1-\upsilon_H}{1-\upsilon_H + 1}$ are both less than one \eqref{ica2} is violated. Hence, the first-best is not incentive compatible for the low-type and cannot be sustained in equilibrium.\footnote{It can easily be shown the same is true for the high-type.} 
\end{proof}
Intuitively, under the first-best only the high-skill worker ever overrides the algorithm. Because of this, by overriding the algorithm the worker will be perceived as high-skill with probability one. Hence, both the low-skill and high-skill workers would have incentives to override the algorithm no matter what their signal is in order to be perceived as high-skill. As shown below, there exists an equilibrium in which the low-skill worker sometimes overrides the algorithm even though it is inefficient for him to do so. Before presenting the proposition, I establish the following definition regarding a ``informative equilibrium'', in which the distribution of the worker's message depends on the worker's type:\footnote{This definition follows closely the definition in \cite{guembel2009reputational}. The only difference is that it assumes that the manager interprets $m_1$ as $\omega_1$ is more likely and $m_0$ as $\omega_0$ is more likely. This is however, without loss of generality given that the messages have no intrinsic meaning to begin with.}
\begin{definition}
In a informative equilibrium, the manager's posterior belief that the worker is high-skill is always higher when he correctly forecasts the state, i.e., $\hat{\theta}( m_1, a, \omega_1) > \hat{\theta}( m_0, a, \omega_1)$  and $\hat{\theta}( m_0, a, \omega_0) > \hat{\theta}( m_1, a, \omega_0)$ for $a \in \{a_0, a_1\}$.
\end{definition}
This definition is useful because it rules out any babbling equilibria in which the the low and high-skill worker always follow the algorithm, override it, or randomize independently of their own signal.\footnote{Intuitively, the manager does not learn anything from the worker's forecast and subsequent state realization if the joint distribution of the worker's forecast and realization is independent of his skill.} 

Based on this definition, I next prove the following lemma which says the high-skill worker will always report his signal in any informative equilibrium.
\begin{lemma}\label{lemhighic}
    The high-skill worker always reports his signal in an informative equilibrium. 
\end{lemma}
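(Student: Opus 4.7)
The plan is a single-crossing argument: the worker's preference for $m_1$ over $m_0$ is strictly increasing in his posterior on $\omega_1$ by the informative-equilibrium inequalities, while the high-skill worker's posterior sits at both extremes of the type-signal ranking, so he must match his signal to his message. Fix an algorithm signal $a\in\{a_0,a_1\}$; the argument is symmetric across $s_0$ and $s_1$.

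First, let $V(m;p,a)=p\,\hat{\theta}(m,a,\omega_1)+(1-p)\,\hat{\theta}(m,a,\omega_0)$. Because the definition of an informative equilibrium gives both $\hat{\theta}(m_1,a,\omega_1)-\hat{\theta}(m_0,a,\omega_1)>0$ and $\hat{\theta}(m_0,a,\omega_0)-\hat{\theta}(m_1,a,\omega_0)>0$, the difference $V(m_1;p,a)-V(m_0;p,a)$ is strictly increasing in $p$, so a common cutoff $p^{*}(a)$ governs each type-signal pair's preference between the two messages. Second, using \eqref{post1}--\eqref{post4} together with $\upsilon_H>\alpha>\upsilon_L>\tfrac12$, a direct log-odds calculation yields
\[Pr(\omega_1\mid s_1,a,\theta_H)\;>\;Pr(\omega_1\mid s_1,a,\theta_L)\;>\;Pr(\omega_1\mid s_0,a,\theta_L)\;>\;Pr(\omega_1\mid s_0,a,\theta_H),\]
so the high-skill worker occupies both extremes of the ranking.

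Third, I combine the two. In an informative equilibrium both messages must appear on path after each $a$; otherwise the on-path posterior at the unique message would collapse to $\tfrac12$ and the message would carry no information about the worker's type. Suppose $(s_1,\theta_H)$ does not send $m_1$ with probability one. If he mixes, his posterior equals $p^{*}(a)$, so by the second step every other pair has strictly smaller posterior and strictly prefers $m_0$; then $m_1$ is sent only by $(s_1,\theta_H)$, forcing $\hat{\theta}(m_1,a,\omega)=1$ and contradicting $\hat{\theta}(m_0,a,\omega_0)>\hat{\theta}(m_1,a,\omega_0)$. If he sends $m_0$ deterministically, the monotonicity from the first step propagates the strict preference for $m_0$ to all other pairs, so $m_1$ is off path---another contradiction. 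Hence $(s_1,\theta_H)$ sends $m_1$, and the mirror argument applied to $(s_0,\theta_H)$ (the minimum in the ranking) yields $m_0$.

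The hardest part will be the opening claim in the third step, namely that both messages appear on the equilibrium path after each $a$. With extreme off-path beliefs, the formal inequalities could in principle be sustained when all types pool on one message after $a$, so the step leans on the informal content of ``informative equilibrium''---a genuine dependence of the message distribution on the worker's type---to exclude such degenerate equilibria. The preceding steps are essentially a slope computation and a likelihood-ratio comparison.
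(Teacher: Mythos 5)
Your proof is correct and takes a genuinely different route from the paper's. The paper proceeds in two algebra-heavy stages: it first proves a standalone claim that the high-skill worker never mixes, by writing out his indifference condition, solving it for $\hat{\theta}(m_1,a_1,\omega_1)$, substituting into the other incentive conditions, and checking signs term by term; it then enumerates the four remaining pure strategies for the high-skill type and eliminates three of them case by case. Your argument replaces all of this with one observation---the payoff gap $V(m_1;p,a)-V(m_0;p,a)=p\,A-(1-p)\,B$ with $A,B>0$ is strictly increasing in $p$, so behavior is governed by a single cutoff---combined with the fact that the high-skill type's posteriors occupy both extremes of the four type-signal posteriors. That simultaneously delivers the no-mixing claim (an indifferent high-skill type at the top of the ranking would be the unique sender of $m_1$, forcing $\hat{\theta}(m_1,a,\omega)=1$ and violating $\hat{\theta}(m_0,a,\omega_0)>\hat{\theta}(m_1,a,\omega_0)$) and the elimination of the deviant pure strategies (which all collapse to pooling). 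The one step you rightly flag---that both messages must be on path after each $a$---is exactly the step the paper also leaves informal: its definition of an informative equilibrium only constrains the posteriors, and in ruling out its Cases 2 and 3 it likewise asserts without elaboration that pooling ``would not be informative,'' implicitly invoking the preamble that the message distribution must depend on type (off-path beliefs could otherwise be rigged to satisfy the stated inequalities). So you are not missing anything the paper supplies; your version is shorter, makes the single-crossing economics transparent, and would extend more easily, while the paper's brute-force computations have the side benefit of producing expressions it reuses in the proof of Lemma 2.
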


\begin{proof}
See Appendix.
\end{proof}
The intuition of Lemma \ref{lemhighic} is straightforward. For a given set of the manager's posterior beliefs, it is always less costly for the high-skill worker to follow his own signal because his signal is more informative about the state than the low-skill worker's. Hence, in any informative equilibrium, the high-skill worker will always report his own signal. It will also be useful to show that the low skill worker never overrides the algorithm when he receives the same signal as it.
\begin{lemma}\label{lemlowic}
    The low-skill worker always reports his own signal when his signal equals that of the algorithm in an informative equilibrium, i.e., $m = m_0$ when $a = a_0$, $s = s_0$ and  $m = m_1$ when $a = a_1$, $s = s_1$. 
\end{lemma}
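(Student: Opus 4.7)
The plan is to argue by contradiction. Suppose some informative equilibrium has $p_1 := Pr(m_1 \mid s_1, a_1, \theta_L) < 1$, so the low-skill worker reports $m_0$ with positive probability at $(s_1, a_1)$. By Lemma~\ref{lemhighic} the high-skill worker reports his signal, so at $a = a_1$ the only free parameters are $p_1$ and $p_2 := Pr(m_1 \mid s_0, a_1, \theta_L)$. The parallel claim at $(s_0, a_0)$ will follow by applying the same argument after swapping $0 \leftrightarrow 1$ throughout.

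The first step is to show that $p_2 = 0$. Let $A := \hat{\theta}(m_1, a_1, \omega_1) - \hat{\theta}(m_0, a_1, \omega_1)$ and $B := \hat{\theta}(m_0, a_1, \omega_0) - \hat{\theta}(m_1, a_1, \omega_0)$; both are strictly positive in an informative equilibrium. The IC condition that $m_0$ is weakly preferred at $(s_1, a_1)$ rearranges to
\begin{equation*}
\frac{A}{B} \le \frac{Pr(\omega_0 \mid s_1, a_1, \theta_L)}{Pr(\omega_1 \mid s_1, a_1, \theta_L)} = \frac{(1-\alpha)(1-\upsilon_L)}{\alpha \upsilon_L}.
\end{equation*}
The analogous threshold at $(s_0, a_1)$ is $(1-\alpha)\upsilon_L / (\alpha(1-\upsilon_L))$, which is strictly larger because $\upsilon_L > \tfrac{1}{2}$. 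Hence whenever $m_0$ is weakly preferred at $(s_1, a_1)$, it is strictly preferred at $(s_0, a_1)$, forcing $p_2 = 0$.

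With $p_1 = u < 1$ and $p_2 = 0$, direct Bayes updating (using that the high-skill worker reports truthfully) gives $A = f(X)$ and $B = f(Z)$, where $X := Pr(m_1 \mid \omega_1, a_1, \theta_L) = \upsilon_L u$, $Z := Pr(m_0 \mid \omega_0, a_1, \theta_L) = 1 - (1-\upsilon_L) u$, and
\begin{equation*}
f(t) := \frac{\upsilon_H - t}{(\upsilon_H + t)\,(2 - \upsilon_H - t)}.
\end{equation*}
A short derivative calculation shows $f$ is strictly decreasing on $[0, \upsilon_H)$, and the informativeness requirements $A, B > 0$ place $X, Z$ inside that interval. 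Since $u < 1$, we have $X = \upsilon_L u < \upsilon_L < 1 - (1-\upsilon_L) u = Z$, and so $A / B = f(X) / f(Z) > 1$. But the IC inequality from the previous step demanded $A/B \le (1-\alpha)(1-\upsilon_L)/(\alpha \upsilon_L) < 1$, a contradiction. Therefore $p_1 = 1$, and the symmetry swap delivers the corresponding statement at $(s_0, a_0)$.

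The main obstacle is that $A$ and $B$ depend on the entire low-skill mixing profile, so IC at $(s_1, a_1)$ cannot be tested in isolation. The crux is the two-step structure: first eliminate $p_2$ by comparing the posterior-odds ratios at $(s_1, a_1)$ and $(s_0, a_1)$, then reduce the remaining IC to a one-variable monotonicity check on $f$. Once $p_2 = 0$ is pinned down, the contradiction is driven by the elementary observations that $\upsilon_L u < \upsilon_L$ and $1 - (1-\upsilon_L) u > \upsilon_L$ whenever $u < 1$.
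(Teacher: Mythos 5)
Your proof is correct, and while it follows the same broad contradiction strategy as the paper's --- assume $m_0$ gets positive probability at $(s_1,a_1)$, use informativeness to pin down behavior at $(s_0,a_1)$, then derive a contradiction from the resulting posteriors --- the execution is genuinely different and cleaner. The paper splits into two cases (the low-skill worker mixes at $(s_1,a_1)$ versus plays $m_0$ for sure) and in each case computes the posteriors explicitly and verifies the sign of a fairly opaque rational expression; in the second case it even allows an arbitrary mixing probability at $(s_0,a_1)$ and runs a monotonicity-plus-endpoint argument in that variable. You instead (i) handle both cases at once by noting that any positive weight on $m_0$ at $(s_1,a_1)$ forces $A/B \le (1-\alpha)(1-\upsilon_L)/(\alpha \upsilon_L) < 1$, (ii) pin down $p_2 = 0$ by the elementary observation that the posterior-odds threshold at $(s_0,a_1)$ strictly exceeds the one at $(s_1,a_1)$ because $\upsilon_L > \tfrac{1}{2}$, and (iii) package the posterior differences into the single function $f$, so that the contradiction reduces to monotonicity of $f$ together with $X = \upsilon_L u < \upsilon_L < 1-(1-\upsilon_L)u = Z$. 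The identity $A = f(X)$, $B = f(Z)$ --- which holds because the truthful high-skill worker contributes $\upsilon_H$ symmetrically to the $m_1$-given-$\omega_1$ and $m_0$-given-$\omega_0$ likelihoods --- is the structural observation the paper's algebra obscures, and it makes transparent why the reputational reward for a correct override must exceed that for a correct follow whenever the low-skill worker withholds any probability from $m_1$. The only step worth spelling out is that the denominator of $f$ is positive on the relevant range, so $A,B>0$ is genuinely equivalent to $X,Z<\upsilon_H$; with that noted, your argument is complete and, in my view, preferable to the paper's.
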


\begin{proof}
    See Appendix.
\end{proof}
The intuition for \eqref{lemlowic} is as follows: if the low-skill worker is going to override the algorithm, it is always less costly to do so when his signal is different than that of the algorithm. Indeed, as shown below, the low-skill worker will sometimes override the algorithm when his signal differs from it. 

Following Lemmas \ref{lemhighic} and \ref{lemlowic}, the following proposition characterizes the unique informative equilibrium. 

\begin{prop}[Algorithm Aversion]\label{equilibrium}
    There exists a unique, informative equilibrium in which the high-skill worker always reports his own signal and the low-skill worker reports the algorithm's signal with probability $\gamma \in (0,1)$ and his own signal with probability $1-\gamma$, where $\gamma$ is the unique solution to the following equation:
    \begin{gather}
   \sum_\omega  Pr(\omega | s_1, a_0, \theta_L) \hat{\theta}( m_0, a_0, \omega)- \sum_\omega  Pr(\omega | s_1, a_0, \theta_L)  \hat{\theta}( m_1, a_0, \omega)  =0, 
    \end{gather}
    where 
    \begin{align}
        \hat{\theta}( m_1, a_0, \omega_1) &= \frac{\upsilon _H}{\upsilon _H+(1-\gamma ) \upsilon _L}, \\
          \hat{\theta}( m_1, a_0, \omega_0) &=\frac{1-\upsilon _H}{1-\upsilon _H+(1-\gamma )
   \left(1-\upsilon _L\right)}, \\  
         \hat{\theta}( m_0, a_0, \omega_1) &= \frac{1-\upsilon _H}{\left(1-\upsilon
   _H\right)+\left(1-\upsilon _L\right)+\gamma \upsilon _L}, \\
          \hat{\theta}( m_0, a_0, \omega_0) &= \frac{\upsilon _H}{\upsilon _H+\upsilon _L+\gamma 
   \left(1-\upsilon _L\right)}.
    \end{align}
\end{prop}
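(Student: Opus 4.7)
My plan is to use Lemmas \ref{lemhighic} and \ref{lemlowic} to collapse the search for an informative equilibrium down to a single unknown, the probability $\gamma$ with which the low-skill worker reports the algorithm's signal when his own signal conflicts with it. By Lemma \ref{lemhighic} the high-skill worker always reports his signal, and by Lemma \ref{lemlowic} the low-skill worker reports his signal whenever it coincides with the algorithm's. By the symmetry of the information structure, it suffices to analyze the case $s = s_1$, $a = a_0$.

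The first step is to apply Bayes' rule to obtain the four posteriors displayed in the statement. Conditional on $a = a_0$, the message $m_1$ is sent by (i) the high-skill worker with signal $s_1$ and (ii) the low-skill worker with signal $s_1$ with probability $1-\gamma$; the message $m_0$ is sent by both types of worker with signal $s_0$ together with the low-skill worker with signal $s_1$ with probability $\gamma$. Since the algorithm's signal is independent of $\theta$ given $\omega$, the factor $Pr(a_0 \mid \omega)$ cancels in the updating, and plugging in $Pr(\theta_H)=Pr(\theta_L)=\tfrac{1}{2}$ together with the signal likelihoods yields the four expressions displayed in the proposition.

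Next I would rule out both pure-strategy corners. Taking $\gamma = 1$ recovers the first-best allocation, which Proposition \ref{propfb} has already shown cannot be an equilibrium: the posteriors satisfy $\hat\theta(m_1, a_0, \omega) = 1$ for both values of $\omega$, so the low-skill worker strictly prefers $m_1$ to $m_0$. Taking $\gamma = 0$ causes the posteriors to collapse to those of the no-algorithm benchmark, because the message fully reveals the worker's private signal and $a$ carries no information about $\theta$ given $(s,\omega)$. Under those posteriors one has $\hat\theta(m_0, a_0, \omega_0) > \hat\theta(m_1, a_0, \omega_0)$ and $\hat\theta(m_1, a_0, \omega_1) > \hat\theta(m_0, a_0, \omega_1)$. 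Because $\alpha > \upsilon_L$ by Assumption \ref{a1}, the low-skill worker with $s_1$ who sees $a_0$ assigns higher probability to $\omega_0$ than to $\omega_1$, so the weighted sum tilts in favor of $m_0$: the low-skill worker strictly prefers to follow the algorithm. Hence neither corner is incentive compatible, and any informative equilibrium must feature interior mixing with $\gamma \in (0,1)$, pinned down by the indifference condition in the statement.

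Finally, for uniqueness I would show that the function $g(\gamma) = \sum_\omega Pr(\omega \mid s_1, a_0, \theta_L)[\hat\theta(m_0, a_0, \omega) - \hat\theta(m_1, a_0, \omega)]$ is strictly decreasing in $\gamma$. Inspection of the four formulas makes this transparent: both $\hat\theta(m_1, a_0, \omega_1)$ and $\hat\theta(m_1, a_0, \omega_0)$ are strictly increasing in $\gamma$ because raising $\gamma$ thins the pool of low-skill overriders, whereas both $\hat\theta(m_0, a_0, \omega_1)$ and $\hat\theta(m_0, a_0, \omega_0)$ are strictly decreasing in $\gamma$ because raising $\gamma$ thickens the pool of low-skill followers. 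Combined with $g(0) > 0 > g(1)$ from the corner analysis, the intermediate value theorem gives a unique $\gamma^\ast \in (0,1)$. The main obstacle is verifying the sign of $g(0)$: it requires combining $\alpha > \upsilon_L$ (to sign the probability wedge $Pr(\omega_0 \mid s_1, a_0, \theta_L) - Pr(\omega_1 \mid s_1, a_0, \theta_L) > 0$) with $\upsilon_H > \upsilon_L$ (to sign the reputational wedge $\hat\theta(m_0, a_0, \omega_0) - \hat\theta(m_1, a_0, \omega_0) > 0$) before one can conclude that the weighted average is strictly positive.
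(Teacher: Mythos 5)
Your proposal is correct and follows essentially the same route as the paper's proof: invoke Lemmas \ref{lemhighic} and \ref{lemlowic} to reduce the problem to the single mixing probability $\gamma$, compute the four posteriors, and pin down a unique interior $\gamma$ by showing the indifference gap is negative at $\gamma=1$ (via Proposition \ref{propfb}), positive at $\gamma=0$, and strictly decreasing in between. Your monotonicity argument by inspection (each $\hat{\theta}(m_0,a_0,\omega)$ decreasing and each $\hat{\theta}(m_1,a_0,\omega)$ increasing in $\gamma$) and your factorization of the sign of $g(0)$ into the probability wedge from $\alpha>\upsilon_L$ times the reputational wedge from $\upsilon_H>\upsilon_L$ are cleaner than the paper's explicit differentiation and closed-form evaluation, but substantively identical.
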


\begin{proof}
Because of symmetry, we can again restrict focus to the case when the worker receives signal $s_1$. First consider the case when $a = a_1$. From Lemmas \ref{lemhighic} and \ref{lemlowic} both low and high-skill workers will report $m = m_1$. Now consider, the case in which $a = a_0$. The low-skill worker has the option to report his own signal or the algorithm's. Let $\gamma$ denote the probability that the low-skill worker reports the algorithm's signal, i.e., $m = m_0$. We can calculate the manager's posterior beliefs as follows:
\begin{align}
        \hat{\theta}( m_1, a_0, \omega_1) &= \frac{\upsilon _H}{\upsilon _H+(1-\gamma ) \upsilon _L}, \\
          \hat{\theta}( m_1, a_0, \omega_0) &=\frac{1-\upsilon _H}{1-\upsilon _H+(1-\gamma )
   \left(1-\upsilon _L\right)}, \\  
         \hat{\theta}( m_0, a_0, \omega_1) &= \frac{1-\upsilon _H}{\left(1-\upsilon
   _H\right)+\left(1-\upsilon _L\right)+\gamma \upsilon _L}, \\
          \hat{\theta}( m_0, a_0, \omega_0) &= \frac{\upsilon _H}{\upsilon _H+\upsilon _L+\gamma 
   \left(1-\upsilon _L\right)}.
    \end{align}
where the posteriors do not depend on $a$ because the algorithm's signal is independent of the worker's type $\theta$ and independent of the worker's signal $s$ conditional on the state $\omega$. First, it can easily be confirmed that these posterior beliefs satisfy the definition of an informative equilibrium, i.e., $\hat{\theta}( m_1, a_0, \omega_1) >  \hat{\theta}( m_0, a_0, \omega_1)$ and  $\hat{\theta}( m_0, a_0, \omega_0) >  \hat{\theta}( m_1, a_0, \omega_0)$. Next, define $G(\gamma)$ as the low-skill worker's difference in payoffs from reporting $m_1$ versus $m_0$ when he receives signal $s_1$ and the algorithm's signal is $a_0$: 
    \begin{gather}
    G(\gamma) \equiv \sum_\omega  Pr(\omega | s_1, a_0, \theta_L) \hat{\theta}( m_0, a_0, \omega) - \sum_\omega  Pr(\omega | s_1, a_0, \theta_L)  \hat{\theta}( m_1, a_0, \omega) .
    \end{gather}
First note from Proposition \ref{propfb}, when $\gamma = 1$, we are in the first-best use of information and $G(\gamma)$ is negative, implying that both the low-skill and high-skill worker will always report $m_1$. Hence $\gamma$ must be less than $1$ in any informative equilibrium. Differentiating $G(\gamma)$ w.r.t to $\gamma$ we have:
    \begin{gather}\label{gdec}
        G'(\gamma) = -\frac{\frac{(1-\alpha ) \upsilon _H \upsilon
   _L^2}{\left(\upsilon _H+(1-\gamma ) \upsilon
   _L\right){}^2}+\frac{\alpha  \left(1-\upsilon _H\right)
   \left(1-\upsilon _L\right){}^2}{\left(2-\gamma -\upsilon
   _H-(1-\gamma ) \upsilon _L\right){}^2}+\frac{\alpha 
   \upsilon _H \left(1-\upsilon _L\right){}^2}{\left(\gamma
   +\upsilon _H+(1-\gamma ) \upsilon
   _L\right){}^2}+\frac{(1-\alpha ) \left(1-\upsilon
   _H\right) \upsilon _L^2}{\left(2-\upsilon _H-(1-\gamma )
   \upsilon _L\right){}^2}}{\alpha - (2 \alpha -1)\upsilon_L}, 
    \end{gather}
which is negative because both the numerator and denominator are positive, implying that $G(\gamma)$ is decreasing in $\gamma$. When $\gamma = 0$ we have: 
    \begin{gather}
        G(0) = \frac{\left(\alpha -\upsilon _L\right) \left(\upsilon
   _H-\upsilon _L\right)}{\left(2-\upsilon _H-\upsilon
   _L\right) \left(\upsilon _H+\upsilon _L\right)
   \left(\alpha -(2 \alpha -1) \upsilon _L\right)},
    \end{gather}
 which is positive. Hence, there exists a unique $\gamma$ such that $G(\gamma) = 0$. Note that whenever $a = a_0$, the low-skill worker must be indifferent between reporting $m_0$ and $m_1$. Hence, this is the unique informative equilibrium.
\end{proof}

Proposition \ref{equilibrium} shows that, despite the algorithm providing a more informative signal than the low-skill worker's, the low-skill worker sometimes overrides it when his own signal is different than that of the algorithm's. Intuitively, the low-skill worker cannot always follow the algorithm, otherwise overriding the algorithm signals with probability one that he is high-skill. In other words, the worker must override the algorithm occasionally in order to convince their manager they are not dispensable. Hence, Proposition \ref{equilibrium} shows how completely rational reputational concerns can endogenously create algorithm aversion.

\cite{agarwal2023combining} show that radiologists are less likely to follow an algorithm when they believe that the algorithm's forecast is more uncertain. Motivated by this empirical finding, I next show that the more uncertain the algorithm's signal is, the more likely the low-skill worker overrides the algorithm in the unique informative equilibrium.  
\begin{prop}\label{prop:unc}
 Algorithm aversion increases in the uncertainty of the algorithm, i.e., $\gamma$ is increasing in $\alpha$ in the informative equilibrium. 
\end{prop}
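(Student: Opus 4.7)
The plan is to invoke the implicit function theorem on the defining equation $G(\gamma, \alpha) = 0$, now viewing $G$ as a function of both arguments. The proof of Proposition \ref{equilibrium} already established that $\partial G/\partial \gamma < 0$ (see equation \eqref{gdec}), so it suffices to show $\partial G/\partial \alpha > 0$; the chain rule then yields
\[
\frac{d\gamma}{d\alpha} = -\frac{\partial G/\partial \alpha}{\partial G/\partial \gamma} > 0.
\]

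The crucial structural observation, already flagged in the proof of Proposition \ref{equilibrium}, is that the manager's posteriors $\hat{\theta}(m, a_0, \omega)$ depend only on $(\gamma, \upsilon_L, \upsilon_H)$ and \emph{not} on $\alpha$, because the algorithm's signal is independent of the worker's type conditional on $\omega$. All of the $\alpha$-dependence in $G$ therefore sits inside the worker's posterior $Pr(\omega \mid s_1, a_0, \theta_L)$, so
\[
\frac{\partial G}{\partial \alpha} \;=\; \sum_\omega \frac{\partial Pr(\omega \mid s_1, a_0, \theta_L)}{\partial \alpha}\bigl[\hat{\theta}(m_0, a_0, \omega) - \hat{\theta}(m_1, a_0, \omega)\bigr].
\]
Differentiating expression \eqref{post3} directly shows that $Pr(\omega_1 \mid s_1, a_0, \theta_L)$ is strictly decreasing in $\alpha$ and $Pr(\omega_0 \mid s_1, a_0, \theta_L)$ strictly increasing (intuitively, a more accurate algorithm signalling $a_0$ tilts the worker's posterior toward $\omega_0$).

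By the definition of an informative equilibrium, the bracket $\hat{\theta}(m_0, a_0, \omega) - \hat{\theta}(m_1, a_0, \omega)$ is strictly positive at $\omega = \omega_0$ and strictly negative at $\omega = \omega_1$. Raising $\alpha$ therefore shifts probability weight from the negative-bracket term to the positive-bracket term, which makes $\partial G/\partial \alpha > 0$. Combined with $\partial G/\partial \gamma < 0$, this delivers $d\gamma/d\alpha > 0$, which is exactly the claim.

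I expect the main obstacle to be essentially bookkeeping rather than substantive: one has to verify that the sign pattern of the two brackets holds not just at the equilibrium $\gamma$ but uniformly in $\gamma \in (0,1)$, so that the sign of $\partial G/\partial \alpha$ does not flip as $\alpha$ varies. Fortunately this is inherited directly from $\upsilon_H > \upsilon_L > \tfrac{1}{2}$ via the explicit expressions for $\hat{\theta}(\cdot)$ in the statement of Proposition \ref{equilibrium}, so the argument does not require any delicate case analysis.
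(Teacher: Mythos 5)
Your proposal is correct and follows essentially the same route as the paper: the implicit function theorem applied to $G(\gamma,\alpha)=0$, with $\partial G/\partial\gamma<0$ carried over from the equilibrium characterization and $\partial G/\partial\alpha>0$ obtained by noting that the manager's posteriors are $\alpha$-free, so the numerator reduces to the derivatives of $Pr(\omega\mid s_1,a_0,\theta_L)$ weighted by the posterior differences, signed via the informativeness condition. Your closing worry about uniformity in $\gamma$ is unnecessary, since the implicit function theorem only requires the sign pattern at the equilibrium point, where informativeness is already verified.
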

\begin{proof}
Applying the implicit function theorem we have: 
   \begin{gather}
       \frac{ d \gamma }{d \alpha} = -\frac{ \partial G(\gamma)/ \partial \alpha}{\partial G(\gamma)/\partial \gamma},
    \end{gather}
    where the denominator is negative from \eqref{gdec} and the numerator is equal to:
    \begin{align}
     & \sum_\omega  \frac{\partial Pr(\omega | s_1, a_0, \theta_L)}{\partial \alpha} \hat{\theta}( m_0, a_0, \omega) - \sum_\omega  \frac{\partial Pr(\omega | s_1, a_0, \theta_L)}{\partial \alpha}  \hat{\theta}( m_1, a_0, \omega) \\ 
     &= \frac{\left(1-\upsilon _L\right) \upsilon _L}{\left(\alpha +
   (2 \alpha -1) \upsilon _L\right){}^2} \left( \hat{\theta}( m_0, a_0, \omega_0) - \hat{\theta}( m_0, a_0, \omega_1) + \hat{\theta}( m_1, a_0, \omega_1) - \hat{\theta}( m_1, a_0, \omega_0) \right), 
    \end{align}
 where first term is positive and the second term is also positive based on the definition of an informative equilibrium. Hence, $\frac{ d \gamma }{d \alpha}$ is positive and $\gamma$ is increasing in $\alpha$ in the informative equilibrium.

\end{proof}
This result may at first glance seem obvious. If the algorithm is less accurate, the low-skill worker should be less likely to follow it. However, from the manager's perspective so long as Assumption \ref{a1} is still satisfied, it is always efficient for the low-skill worker to report the algorithm's signal. Moreover, there are two effects here. On the one hand, it is less costly to override the algorithm given that the algorithm's signal is uninformative. On the other hand, the high-skill worker also becomes more likely to override the algorithm which increases the signaling value of overriding the algorithm for the low-skill worker. To see this, we can calculate the probability the high-skill worker receives a different signal than the algorithm:
\begin{align}
    Pr(s_1, a_0 | \theta_H) &= \Pr(s_1, a_0 | \omega_1, \theta_H) \Pr(\omega_1 | \theta_H) + \Pr(s_1, a_0 | \omega_0, \theta_H) \Pr(\omega_0 | \theta_H) \\ 
    &= \frac{1}{2}\left(\Pr(s_1, a_0 | \omega_1, \theta_H) + \Pr(s_1, a_0 | \omega_0, \theta_H) \right) \\ 
        &= \frac{1}{2}\left( (1-\alpha)\upsilon_H  + \alpha (1-\upsilon_H) \right). \label{probskill}
\end{align}
Differentiating the RHS of \eqref{probskill} with respect to $\alpha$ we have $\frac{1}{2} - \upsilon_H$, which is negative. Hence, $Pr(s_1, a_0 | \theta_H)$ is decreasing in $\alpha$, which implies that the less accurate the algorithm's signal is, the more likely the high-skill worker receives the opposite signal of the algorithm. Since in equilibrium the high-skill worker always overrides the algorithm whenever he receives the opposite signal of the algorithm, this result implies that the high-skill worker more frequently overrides the algorithm. In turn, this increases the signaling content of overriding the algorithm in equilibrium which further incentivizes the low-skill worker to override the algorithm.

\section{Labor Market Implications}

In this section I explore some of the labor market implications of reputational algorithm aversion.

Several studies find that even when being able to incorporate an algorithm's forecast, humans' forecasts often perform worse than the algorithm on its own (e.g., \cite{agarwal2023combining} and \cite{angelova2023algorithmic}). The next proposition shows that this phenomenon can also occur in the model.
\begin{prop}
 When the average accuracy of the worker's signal is less than that of the algorithm's signal, i.e., $\frac{1}{2}(\upsilon_L + \upsilon_H) > \alpha$, the expected accuracy of the worker's forecast can be lower than that of the algorithm's. 
\end{prop}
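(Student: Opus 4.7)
The approach is to write the worker's equilibrium forecast accuracy in closed form using the mixing probability $\gamma$ from Proposition \ref{equilibrium}, and then compare it directly to $\alpha$.

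\textbf{Step 1 (Forecast accuracy formula).} By the symmetry that is invoked throughout Section 4, it suffices to compute $\Pr(m=m_1 \mid \omega_1)$. Conditioning on the worker's type and applying Lemma \ref{lemhighic}, $\Pr(m_1 \mid \omega_1, \theta_H) = \upsilon_H$. For the low-skill worker, I would enumerate the four joint signal realizations $(s,a)$ conditional on $\omega_1$ (using independence of $s$ and $a$ given $\omega$), apply the mixing rule from Proposition \ref{equilibrium} (report the algorithm with probability $\gamma$ only when signals disagree, by Lemma \ref{lemlowic}), and simplify. The expected outcome is
\begin{equation}
\Pr(m_1 \mid \omega_1, \theta_L) \;=\; \upsilon_L + \gamma(\alpha - \upsilon_L),
\end{equation}
so averaging over the two equally likely types gives
\begin{equation}
P_{\text{forecast}} \;=\; \tfrac{1}{2}\bigl[\upsilon_H + \upsilon_L + \gamma(\alpha - \upsilon_L)\bigr].
\end{equation}

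\textbf{Step 2 (Comparison to $\alpha$).} Subtracting, $P_{\text{forecast}} < \alpha$ is equivalent to $\gamma(\alpha - \upsilon_L) < 2\alpha - (\upsilon_H + \upsilon_L)$. Because $\alpha > \upsilon_L$ under Assumption \ref{a1}, this rearranges to the single-inequality condition
\begin{equation}
\gamma \;<\; \frac{2\alpha - (\upsilon_H + \upsilon_L)}{\alpha - \upsilon_L},
\end{equation}
and the right-hand side is strictly positive precisely when the worker's average signal accuracy $\tfrac{1}{2}(\upsilon_H + \upsilon_L)$ lies below $\alpha$, which is the hypothesis of the proposition.

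\textbf{Step 3 (Existence).} To finish I need to exhibit parameters consistent with Assumption \ref{a1} and the hypothesis under which the equilibrium $\gamma$ from Proposition \ref{equilibrium} satisfies the bound in Step 2. The cleanest route is a continuity/limit argument: the closed-form expression for $G(0)$ given in the proof of Proposition \ref{equilibrium} tends to zero as $\upsilon_H \downarrow \alpha$ or $\upsilon_L \uparrow \alpha$, and since $G$ is continuous and strictly decreasing with root $\gamma$, in such limits $\gamma \downarrow 0$. Choosing, for instance, $\upsilon_L$ slightly above $\tfrac{1}{2}$ and $\upsilon_H$ only marginally above $\alpha$ keeps the Step 2 threshold bounded away from zero while driving equilibrium $\gamma$ below it; alternatively a single numerical triple $(\upsilon_L, \alpha, \upsilon_H)$ can be checked directly.

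\textbf{Main obstacle.} The only genuine friction is that $\gamma$ is only implicitly defined by $G(\gamma)=0$, so a parameter-free closed-form comparison is unavailable. Since the proposition only claims that the inequality \emph{can} occur, the proof reduces to converting that existence statement into a transparent inequality for a specific parameter region, which the $\upsilon_H \downarrow \alpha$ limit (or a numerical example) handles cleanly.
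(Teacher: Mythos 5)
Your Steps 1 and 2 reproduce the paper's computation exactly: the paper derives the same accuracy expression $\tfrac{1}{2}\left(\alpha\gamma + \upsilon_H + (1-\gamma)\upsilon_L\right) = \tfrac{1}{2}\left(\upsilon_H + \upsilon_L + \gamma(\alpha-\upsilon_L)\right)$ and notes it can fall below $\alpha$ only when $\tfrac{1}{2}(\upsilon_H+\upsilon_L) < \alpha$ (you correctly resolve the sign typo in the proposition's displayed inequality). Where you diverge is the existence step: the paper simply checks one numerical triple, $\upsilon_L = 0.55$, $\alpha = 0.60$, $\upsilon_H = 0.62$, solves for $\gamma \approx 0.0148$, and gets accuracy $0.585 < 0.60$ --- exactly the fallback you mention at the end of Step 3.

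Your primary route in Step 3, however, fails as stated. From the closed form $G(0) = \frac{(\alpha-\upsilon_L)(\upsilon_H-\upsilon_L)}{(2-\upsilon_H-\upsilon_L)(\upsilon_H+\upsilon_L)\left(\alpha-(2\alpha-1)\upsilon_L\right)}$, the limit $\upsilon_H \downarrow \alpha$ sends the numerator to $(\alpha-\upsilon_L)^2 > 0$, not to zero, so $\gamma$ does not vanish along that limit. The limit that does give $G(0)\to 0$ and hence $\gamma \to 0$, namely $\upsilon_L \uparrow \alpha$, is unusable here: it sends your Step 2 threshold $\frac{2\alpha-\upsilon_H-\upsilon_L}{\alpha-\upsilon_L}$ to $-\infty$, i.e., the hypothesis $\tfrac{1}{2}(\upsilon_H+\upsilon_L)<\alpha$ is violated there because $\upsilon_H>\alpha$. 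The parameter region you name ($\upsilon_L$ near $\tfrac{1}{2}$, $\upsilon_H$ just above $\alpha$) does in fact work, but for a different reason than ``driving $\gamma$ to zero'': in that limit the threshold tends to $1$ while the equilibrium $\gamma$ stays strictly bounded below $1$, since $G(1) = \sum_\omega Pr(\omega\mid s_1,a_0,\theta_L)\hat{\theta}(m_0,a_0,\omega) - 1 < -\tfrac{1}{2}$ uniformly (this is the content of Proposition \ref{propfb}). Repair Step 3 either with that corrected limit argument or with the direct numerical check, and the proof is complete and matches the paper's in substance.
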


\begin{proof}
Again, because of symmetry we can restrict focus to the case in which the worker reports $m_1$ and examine how likely $\omega = \omega_1$ as compared to when the algorithm produces signal $a_1$.\footnote{Because of symmetry $m_1$ and $a_1$ are always realized half the time.} For the worker to be more accurate than the algorithm, the following condition must hold:  
\begin{align}
    Pr(\omega_1 | m_1) &\geq Pr(\omega_1 | a_1) ,
    \implies Pr(\omega_1 | m_1) \geq \alpha. \label{acc1}
\end{align}
We can rewrite the $Pr(\omega_1 | m_1)$ as follows: 
\begin{align}
    &= Pr( m_1 | \omega_1, a_0 ) Pr(a_0|\omega_1) +  Pr( m_1 | \omega_1, a_1 ) Pr(a_1|\omega_1) \\
&= \frac{1}{2} \left(\alpha  \gamma +\upsilon _H+(1-\gamma ) \upsilon _L\right) . \label{eq:acc}
\end{align}
From this expression notice that if the average accuracy of the worker is higher than that of the algorithm, i.e., $\frac{1}{2}(\upsilon_L + \upsilon_H) > \alpha$, then the worker's forecast is always more accurate than the algorithm. However, when $\frac{1}{2}(\upsilon_L + \upsilon_H) < \alpha$ this may no longer be the case. For example suppose that $\upsilon_L = 0.55$, $\alpha = 0.60$ and $\upsilon_H = 0.62$. Solving the informative equilibrium numerically, we have that $\gamma = 0.0148$. Plugging these values into the expression for the worker's forecast accuracy, \eqref{eq:acc}, we have $0.58537$ which is clearly less than the algorithm alone $(0.60)$. Hence, there are cases in which the algorithm performs better than the worker's forecast even though the worker has access to the algorithm's signal. 
\end{proof}

This result suggests that despite the potential of humans improving the forecast of the algorithm, their reputational concerns can make it such that their forecast is worse than the algorithm's on its own. Hence, to the extent wages are determined by workers' ability to improve the algorithm's forecast, reputational algorithm aversion should reduce workers' wages relative to the wages they would receive if they used information efficiently. Moreover, if there are costs of training or maintaining workers, this phenomenon could cause more firms to avoid hiring them at all. 

Next I explore how algorithm aversion affects the relationship between the worker's wage and the algorithm's accuracy. I do not explicitly model the wage setting process, but instead simply assume wages are increasing in the difference in the expected accuracy of the worker's forecast and that of the algorithm.

First, it will be useful to subtract the algorithm's accuracy, $\alpha$, from the human's forecast accuracy given in \eqref{eq:acc}:
\begin{gather}\label{eq:diffacc}
\frac{1}{2} \left(\alpha  \gamma +\upsilon _H+(1-\gamma ) \upsilon _L\right)  - \alpha. 
\end{gather}
Differentiating \eqref{eq:diffacc} with respect to $\alpha$ we have:
\begin{gather}\label{acc:der}
    \frac{1}{2}\left(\gamma + (\alpha - \upsilon_L ) \frac{d \gamma}{d \alpha} - 2\right).
\end{gather}
Notice that increasing the accuracy of the algorithm has two effects on the worker's marginal contribution to the forecast's accuracy. The obvious direct effect is that a more accurate algorithm makes it such that the human becomes less valuable for forecasting the outcome. However, from Proposition \ref{prop:unc}, $\frac{d \gamma}{d \alpha}$ is positive, which creates a counteracting attenuating effect due to the worker endogenously following the algorithm more frequently when the algorithm becomes more accurate. In other words, the human's added value increases due to a reduction in algorithm aversion as the algorithm becomes more accurate. Hence, while overall wages should be overall lower, algorithm aversion also flattens the relationship between wages and algorithm accuracy. Hence, relative to a benchmark in which workers use information efficiently, this result suggests that algorithms' effect on workers' wages should be more concentrated in the extensive margin of algorithm adoption rather than the intensive margin of improvements in algorithm accuracy.\footnote{Of course, this assumes algorithms are sufficiently accurate to be useful at all to begin with.}

It is worth also briefly discussing how algorithm aversion affects the adoption of algorithms. Consider a situation in which the manager has a worker in place and can pay a fixed-cost to adopt the algorithm. Here it will be useful to consider the difference between in accuracy between the worker's forecast with the algorithm and that without it: 
\begin{gather}\label{endogenous}
    \frac{1}{2} \left(\alpha  \gamma +\upsilon _H+(1-\gamma ) \upsilon _L\right) - \frac{1}{2}\left(\upsilon_L + \upsilon_H\right) = \frac{1}{2}\left(\alpha - \upsilon_L\right) \gamma 
\end{gather}
Intuitively, \eqref{endogenous} says that the value of the algorithm is equal to the difference in signal accuracy of the low-skill worker and the algorithm, times the probability that the worker is low-skill and follows the algorithm. Naturally, the less often the worker follows the algorithm, i.e., the lower $\gamma$, the less valuable the algorithm is to the firm, whereas if $\gamma = 1$ then we are in the first-best use of information. Hence, to the extent that workers cannot simply be replaced by algorithms the reputational concerns of the worker can also deter the firm from adopting the algorithm to begin with. 

\section{Model Discussion and Applications}

In this section I discuss some of the key assumptions in the model as well as its applications. 

First, one may wonder whether the problem can be resolved if the manager simply asks the worker to report his own forecast after which the manager makes his own forecast. However, this ultimately collapses to the original problem. The low-skill worker would still have incentives to at least sometimes report the his own signal rather than the algorithm's. Relatedly, in some contexts it may not be possible for the worker to convey his signal to the manager, but must instead  make a forecast or take an action himself that indirectly conveys that information.\footnote{This could occur if the information cannot be substantiated or if the worker is making a decision, such as an investment, that must be implemented by himself, not the manager.} 

One potential solution is to make the algorithm's signal only observable to the worker. In this case, the worker's forecast cannot be compared to the algorithm's and hence the worker's forecast cannot convey information to the manager on its own. There are two practical problems with this. First, the algorithm is likely developed by other people in the firm (or outside the firm) with expertise in artificial intelligence.\footnote{This is the main motivation for making the algorithm's output observable to both the worker and the manager - the fact that those who develop algorithms are likely to be different than those who implement decisions using them.} For example, data scientists within the firm can use firm-level inputs to forecast the same outcomes that other employees are tasked with forecasting using their own skill or intuition. The data scientists maintaining the algorithm will inevitably want to see the information produced by it in order to evaluate and continue developing it. Moreover, even if the firm could make the algorithm's signal only observable to the worker, it may not be optimal for the firm because it prevents them from learning about the skill of that worker.

The manager could also consider hiding or delaying the availability of the algorithm's signal to the worker until after the worker reports his forecast. However, here the worker's problem would collapse to the benchmark case without the algorithm, which can easily be shown has a lower forecast accuracy than the informative equilibrium with the algorithm. Moreover, even if the manager ultimately takes responsibility for the forecast or action, the manager cannot distinguish the skill of the worker given that both low and high-skill workers will always report their information. Hence, the manager either will always follow the worker or the algorithm, depending on which has a higher average accuracy. 

In the informative equilibrium the low-skill worker plays a mixed strategy in which he is indifferent between reporting the algorithm's signal or his own. While this may not seem realistic at first, mixed strategies can also be interpreted as pure strategies with random disturbances (e.g., \cite{harsanyi1973games}). Equivalently, I could assume that the low-skill worker's signal precision is a continuous random variable with mean $\upsilon_L$ in which the worker reports his own signal whenever the signal precision is above some threshold. This simple alteration would lead to essentially the same equilibrium in pure strategies.

It is important to emphasize that the signaling mechanism in this model is distinct from that in costly signaling settings (e.g., \cite{spence1978job}). In costly signaling, the signal is exogenously costly (e.g., going to college), while in this model there is no intrinsic cost to overriding the algorithm.\footnote{This is why equilibrium refinements have less bite in cheap-talk models as compared to signaling models (e.g., \cite{chen2008selecting}) because it is costless for senders to randomize over actions on the equilibrium path.} Rather, the forecast conveys information about the worker's skill purely because the high-skill worker has more confidence to override the algorithm given that he receives a more precise signal than the low-skill worker.

One may also wonder whether the main mechanism of the model is an artifact of the binary structure of signals and outcomes. If outcomes are continuous then the problem is essentially the same because the worker still has only two potential pieces of information he can report. If the signal is continuous the mechanism would still be present; however, rather than the override decision being a signal, the distance of the worker's forecast from the algorithm would be a signal of the worker's quality.\footnote{This logic applies whether or not the outcome is continuous or not.} Intuitively, the high-skill worker would have more confidence to report a forecast farther away from the algorithm given that his signal is more informative than the low-skill worker.

As is typical in career concerns models (e.g., \cite{holmstrom1999managerial}), I assume there are no long-term contracts. However, it can easily be shown that if the firm can write contracts, then the first-best use of information can be achieved. Specifically, the firm could provide a long-term wage that is invariant of the worker's performance. In this case, the worker would have no reputational concerns and hence, would be willing to report in accordance with the first-best use of information.\footnote{This case is knife-edge because the worker is indifferent between the two messages; however, while I do not directly model firm-value, the firm could also provide the worker with some equity to further incentivize them.} This is unlikely in practice for several reasons. First, long-term contracts may not be enforceable.\footnote{In particular due to at-will employment the high-skill worker can potentially break the contract after one period for a higher offer outside the firm.} Second, although I do not explicitly model the hiring process, offering these types of long-term contracts could lead to a lemons problem in which only low-skill workers join the firm. Finally, and perhaps largely due to these aforementioned issues, these types of long-term contracts are not often seen in practice.

In terms of applications, it is important that the decisions/forecasts the worker makes are consequential enough and not too frequent. Otherwise, the firm can learn quickly the worker's skill and the inefficiency will be small. For example, a natural application is an employee in a firm's treasury department projecting the future cash flows of the firm or making an investment decision.

It is also important to emphasize that the model applies to situations in which following an algorithm has a reputational cost for a human. In practice, there are some instances in which a human is presented with a forecast either produced by a human or an algorithm and the human is reluctant to follow the algorithm even though there is no obvious reputational cost of doing so (e.g., see \cite{greig2023algorithm} in the case of robo-advising). The model best applies to situations in which an expert has to make some type of forecast or decision based on his own information and an algorithm's and he is ultimately evaluated by others. For example, the model applies to any situation in which an employee of a firm must incorporate information from an algorithm into his decisions (e.g., cash flow forecasting, investment decisions and financial policies) and those decisions are ultimately evaluated by his manager or the firm as a whole.

Finally, it is worth mentioning that while I assume the public signal is generated by an algorithm, the model would also apply to any public information observable to both the manager and the worker. Nonetheless, algorithms are a particularly relevant and important application given that they produce information that is 1) in many cases becoming more accurate than humans and 2) observable within and across firms and institutions.\footnote{As discussed above, very often the employees who develop algorithms are different than those asked to use them, e.g., the treasurer of a firm. Hence, it is unlikely that the user of the algorithm is the only one who can observe its output.}

\section{Conclusion}
Rapid advances in artificial intelligence have raised many questions about the future role of human workers in society. While for certain tasks, algorithms may displace humans entirely, for many others tasks, humans will have to work alongside them. This paper shows how humans' concerns for their own reputations can impede human-AI collaboration. Completely rational humans may engage in ``algorithm aversion'' and override the algorithm even if it is inefficient to do so. The intuition for this idea is straightforward: humans have to sometimes override algorithms in order to prove they are not replaceable. 

Practically, this can lead to one of two problems: either firms become slower to adopt algorithms, or firms forego using humans for a wider variety of tasks than would be necessary in the absence of humans' reputational concerns.  

The solution to reputational algorithm aversion is not obvious. As mentioned earlier, reputational concerns can be alleviated if the decision-makers are also those put in charge of developing the algorithm (e.g., \cite{dietvorst2018overcoming}). However, in many cases, these decision-makers do not have expertise in AI systems. For example, most treasurers of firms likely have little understanding of how a deep learning algorithm works. Similarly, only allowing the decision-maker to see the algorithm's signal would also mitigate reputational concerns. However, this would obviously be difficult if the algorithm is developed by other employees of the firm. Moreover, after promising to keep the algorithm's forecast private, firms may be tempted ex-post to observe the algorithm's signal in an attempt to learn about workers' skills. 

Taken together, this paper identifies an entirely rational, yet realistic force that can cause humans to be averse to incorporating information generated by algorithms into their own decisions. Moreover, as AI systems become more and more integrated into firms, I argue that analyzing human-AI interactions through the lens of strategic communication games will be extremely fruitful method to understand these interactions.

\clearpage 

\bibliography{bib.bib}
	\bibliographystyle{jfnew}

 \clearpage 
 
\appendix

\setcounter{equation}{0}
\renewcommand{\theequation}{\thesection\arabic{equation}}

\section{Additional Proofs}

\begin{proof}[\normalfont \textbf{Proof of Lemma \ref{lemhighic}}]
First, it will be useful to prove the following claim that the high-skill worker never mixes in any informative equilibrium. 

\begin{claim}
 The high-skill worker plays a pure strategy in any informative equilibrium.
\end{claim}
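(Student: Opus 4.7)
The plan is to argue by contradiction: suppose the high-skill worker mixes at some signal pair $(s^*, a^*)$. I would show that this forces the manager's posterior on type, conditional on one of the two messages given $a^*$, to collapse to $1$, which violates the informative-equilibrium requirements $\Delta_0(a^*) > 0$ and $\Delta_1(a^*) > 0$. To set this up, write the payoff difference between sending $m_1$ and $m_0$ at $(s, a)$ for type $\theta$ as $D_\theta(s,a) = Pr(\omega_1 \mid s, a, \theta)\,\Delta_1(a) - Pr(\omega_0 \mid s, a, \theta)\,\Delta_0(a)$, where $\Delta_1(a) := \hat\theta(m_1,a,\omega_1) - \hat\theta(m_0,a,\omega_1)$ and $\Delta_0(a) := \hat\theta(m_0,a,\omega_0) - \hat\theta(m_1,a,\omega_0)$. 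Since both are strictly positive in an informative equilibrium, $D_\theta(s,a) \ge 0$ reduces to a threshold condition $r_\theta(s,a) := Pr(\omega_1 \mid s, a, \theta) \ge q(a)$ with common cutoff $q(a) := \Delta_0(a)/(\Delta_0(a)+\Delta_1(a)) \in (0,1)$.

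Next I would establish the key ordering of the four posteriors at fixed $a$,
\[
r_H(s_0, a) \; < \; r_L(s_0, a) \; < \; r_L(s_1, a) \; < \; r_H(s_1, a).
\]
This is routine algebra using the Bayes expressions analogous to \eqref{post1}--\eqref{post4}: the high-skill likelihood ratio $\upsilon_H/(1-\upsilon_H)$ strictly dominates the low-skill ratio $\upsilon_L/(1-\upsilon_L)$, so the high-skill posterior is strictly more extreme in either direction, while $\upsilon_\theta > 1/2$ delivers $r_\theta(s_1,a) > r_\theta(s_0,a)$.

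The contradiction then follows from two symmetric subcases. Suppose high-skill mixes at $(s_1, a^*)$: indifference pins down $q(a^*) = r_H(s_1, a^*)$, and by the ordering this strictly exceeds $r_L(s_1,a^*)$, $r_L(s_0,a^*)$, and $r_H(s_0,a^*)$. So the low-skill worker plays $m_0$ for both realizations of her signal given $a^*$, and the high-skill worker plays $m_0$ at $(s_0, a^*)$. The only on-path source of $m_1$ given $a^*$ is the high-skill mix at $s_1$, yielding $Pr(m_1 \mid a^*, \omega, \theta_L) = 0$ and, by Bayes' rule on-path, $\hat\theta(m_1, a^*, \omega) = 1$ for both $\omega$. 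But then $\Delta_0(a^*) = \hat\theta(m_0, a^*, \omega_0) - 1 \le 0$, contradicting $\Delta_0(a^*) > 0$. The symmetric subcase is high-skill mixing at $(s_0, a^*)$: this pins down $q(a^*) = r_H(s_0, a^*)$, which lies strictly below every other posterior, so only high-skill ever sends $m_0$ given $a^*$, $\hat\theta(m_0, a^*, \omega) = 1$, and $\Delta_1(a^*) \le 0$, again contradicting informativeness. Since $a^* \in \{a_0, a_1\}$ was arbitrary, no mixing by high-skill is consistent with an informative equilibrium.

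The main obstacle is not technical but organizational: the real leverage comes from observing that $D_\theta$ is monotone in $r_\theta$, so a single cutoff $q(a)$ governs both types, and the extremeness ordering of posteriors then forces whichever message the mixing supports to be fully type-revealing. Once the threshold/ordering machinery is set up, the contradiction in each subcase is just bookkeeping of which conditional probability is zero.
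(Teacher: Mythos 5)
Your argument is correct and arrives at the same final contradiction as the paper's --- once only the high-skill type ever sends the message he mixes over given $a^*$, Bayes' rule forces the posterior on that message to one and informativeness fails --- but the route there is genuinely different and more economical. The paper proceeds by explicit case analysis: it solves the high type's indifference condition \eqref{ichigh2m} for $\hat{\theta}(m_1,a_1,\omega_1)$, substitutes into the other incentive conditions to rule out mixing at both signals and then mixing at $s_1$ paired with $m_1$ at $s_0$, and finally computes posteriors to show the low type pools on $m_0$, invoking the logic of Proposition \ref{propfb} to close. Your single-crossing formulation --- the payoff gap is strictly increasing in $r_\theta(s,a)=Pr(\omega_1 \mid s,a,\theta)$ with a common cutoff $q(a)=\Delta_0(a)/(\Delta_0(a)+\Delta_1(a))$, together with the ordering $r_H(s_0,a)<r_L(s_0,a)<r_L(s_1,a)<r_H(s_1,a)$ coming from the high type's more extreme likelihood ratio --- collapses all of those subcases into one step: indifference of the high type at his most (or least) optimistic posterior pushes every other type-signal pair strictly to the opposite side of the cutoff. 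What this buys is a proof with essentially no rational-function algebra, and a cleaner terminal contradiction ($\hat{\theta}(m_1,a^*,\omega_0)=1$ makes $\Delta_0(a^*)\le 0$ directly) than the paper's appeal to the deviation logic of Proposition \ref{propfb}. The only point worth making explicit if you write this up is that $m_1$ is on path in both states, so Bayes' rule does pin down $\hat{\theta}(m_1,a^*,\omega)$ for each $\omega$; this holds because the high type receives $s_1$ with positive probability in each state and mixes with interior probability.
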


\begin{proof}
   Because of symmetry, it is without loss of generality to consider the case in which the algorithm's signal is $a_1$. First consider the case when $s = s_1$. Suppose to the contrary that the high-skill worker reports $m_1$ with some probability $p \in (0,1)$. For this to be an equilibrium, he must be indifferent between sending $m_0$ and $m_1$: 
    \begin{align}\label{ichigh2m}
      &Pr(\omega_1 | s_1, a_1, \theta_H) \left(\hat{\theta}( m_1, a_1, \omega_1) - \hat{\theta}( m_0, a_1, \omega_1)\right) \\ &= Pr(\omega_0 | s_1, a_1, \theta_H) \left(\hat{\theta}(m_0, a_1, \omega_0) - \hat{\theta}( m_1, a_1, \omega_0)\right). 
\end{align}
If the high-skill worker mixes when $s = s_1$, he cannot also mix when $s = s_0$. To see this, we solve for $\hat{\theta}( m_1, a_1, \omega_1)$ in \eqref{ichigh2m} and plug the solution into the high-skill worker's indifference condition when $s= s_0$:
\begin{gather}\label{eqeq}
    \frac{(1-\alpha) \left(2 \upsilon _H-1\right)
   (\hat{\theta}(m_1,a_1,\omega_0) - \hat{\theta}(m_0,a_1,\omega_0)) }{\upsilon _H \left((2 \alpha -1)
   \upsilon _H -\alpha\right)} = 0.
\end{gather}
However, for \eqref{eqeq} to hold it requires that the  $\hat{\theta}(m_1,a_1,\omega_0) = \hat{\theta}(m_0,a_1,\omega_0)$, which contradicts the equilibrium being informative.

Now consider the case in which the high-skill worker mixes when we he receives signal $s_1$ and plays a pure strategy when he receives signal $s_0$. For the high-skill worker to report $m_1$ when he receives signal $s_0$, the LHS of \eqref{eqeq} must be positive. However, this can only be the case when $\hat{\theta}(m_1,a_1,\omega_0) - \hat{\theta}(m_0,a_1,\omega_0)$, which contradicts the equilibrium being informative. Hence, the only other possibility is that the high-skill worker reports $m_0$ when he receives signal $s_0$. If we plug in the solution for $\hat{\theta}( m_1, a_1, \omega_1)$ obtained from \eqref{ichigh2m} into the low-skill worker's difference in payoff from reporting $m_1$ versus $m_0$ after receiving signal $s_0$ we have: 
   \begin{align}\label{eqeq2}
     \frac{(1-\alpha)(\upsilon_H - \upsilon_L)\left(\hat{\theta}( m_1, a_1, \omega_0) -\hat{\theta}(m_0, a_1, \omega_0)\right)}{\upsilon_H \left(1 - \alpha +(2\alpha - 1)\upsilon_L\right)}.
\end{align}
For the equilibrium to be informative it must be that $\hat{\theta}( m_1, a_1, \omega_0) < \hat{\theta}(m_0, a_1, \omega_0)$, which implies \eqref{eqeq2} is negative, which then implies the low-skill worker would also report $m_0$ when $s = s_0$. The indifference condition of the high-skill worker, \eqref{ichigh2m}, implies that the low-skill worker also always reports $m_0$ when $s = s_1$ because $Pr(\omega_1 | s_1, a_1, \theta_H) > Pr(\omega_1 | s_1, a_1, \theta_L)$ and $Pr(\omega_0 | s_1, a_1, \theta_L) > Pr(\omega_0 | s_1, a_1, \theta_H)$. Hence, under this scenario the low-skill worker always reports $m_0$ while the high-skill worker sometimes reports $m_1$. However, based on the same logic as in Proposition \ref{propfb}, this can never be an equilibrium because the worker will instantly be identified as high-skill if he reports $m_1$ when $a = a_1$. This can easily be confirmed by computing the posteriors and plugging them into either the low or high-skill worker's incentive compatibility constraints.

Since the signals are symmetric this also proves that the high-skill worker mixing when he receives $s_0$ can also never be an equilibrium. The same logic steps apply when the algorithm's signal is $a_0$ instead of $a_1$. Hence, the high-skill worker always plays a pure strategy in any informative equilibrium. 
\end{proof}

Given that the high-skill worker does not mix, there are four possible cases to consider:
\begin{enumerate}[label=\textbf{Case \arabic*:}, left=0pt, itemindent=20pt]
    \item The high-skill worker always reports the opposite of his signal
    \item The high-skill worker always reports the algorithm's signal
    \item The high-skill worker always reports the opposite of the algorithm's signal
    \item The high-skill worker always reports his signal 
\end{enumerate}

Given the symmetry of the problem, it is without loss of generality to assume $a = a_1$ for each case.

First consider Case 1 in which the high-skill worker always reports the opposite of his signal. This requires the following inequalities to hold:  
\begin{align}
   \sum_\omega  Pr(\omega | s_1, a_1, \theta_H)  \hat{\theta}( m_1, a_1, \omega) & \leq  \sum_\omega  Pr(\omega | s_1, a_1, \theta_H)  \hat{\theta}( m_0, a_1, \omega) \\
   \sum_\omega  Pr(\omega | s_0, a_1, \theta_H)  \hat{\theta}( m_0, a_1, \omega) & \leq  \sum_\omega  Pr(\omega | s_0, a_1, \theta_H)  \hat{\theta}( m_1, a_1, \omega).
 \end{align} 
 The first inequality can be written as:
 \begin{align}
     Pr(\omega_1 | s_1, a_1, \theta_H) \left(\hat{\theta}( m_1, a_1, \omega_1) - \hat{\theta}( m_0, a_1, \omega_1)\right) \leq \\
        Pr(\omega_0 | s_1, a_1, \theta_H) \left(\hat{\theta}( m_0, a_1, \omega_0) - \hat{\theta}( m_1, a_1, \omega_0)\right),
 \end{align}
and the second to:
 \begin{align}
     Pr(\omega_0 | s_0, a_1, \theta_H) \left(\hat{\theta}( m_0, a_1, \omega_0) - \hat{\theta}( m_1, a_1, \omega_0)\right) \leq \\
        Pr(\omega_1 | s_0, a_1, \theta_H) \left(\hat{\theta}( m_1, a_1, \omega_1) - \hat{\theta}( m_0, a_1, \omega_1)\right).
 \end{align}
 Both sides of these inequalities must be positive in an informative equilibrium. Moreover, they imply that the low-skill worker would also report the opposite of his own signal because $P(\omega_0|s_1,a_1,\theta_L) > P(\omega_0|s_1,a_1,\theta_H)$  and  $P(\omega_1|s_0,a_1,\theta_L) > P(\omega_0|s_0,a_1,\theta_H)$. However, if the low-skill worker also reports the opposite of his signal, after calculating the corresponding posterior beliefs of the manager, it immediately becomes clear that the equilibrium is not informative:
 \begin{gather}
     \hat{\theta}( m_1, a, \omega_1) = \frac{1-\upsilon_H}{2-\upsilon_H -\upsilon_L} < \frac{\upsilon_H}{\upsilon_H + \upsilon_L} = \hat{\theta}( m_0, a, \omega_1) \quad a \in \{a_0, a_1\}. 
 \end{gather}
Hence, the high-skill worker cannot always report the opposite of his signal. 

Cases 2 and 3 can easily be ruled out based on the following logic. In order for Case 2 or 3 to be an equilibrium, the low-skill worker must also always either report the algorithm's signal (Case 2) or the opposite of the algorithm's signal (Case 3), otherwise the low-skill worker would be identified with probability one as being low-skill. However, in either of these cases the equilibrium would not be informative, which leads to a contradiction. Hence, the only remaining possible case is Case 4 in which the high-skill worker reports his signal.

\end{proof}

\begin{proof}[\normalfont \textbf{Proof of Lemma \ref{lemlowic}}]
Because of the symmetry of the problem, it is again without loss of generality to consider the case in which the algorithm's signal is $a_1$. When $s = s_1$ suppose that the low-skill worker reports $m_1$ with some probability $p \in (0,1)$. Then the following condition must hold:
\begin{align}\label{iclow}
          &Pr(\omega_1 | s_1, a_1, \theta_L) \left(\hat{\theta}( m_1, a_1, \omega_1) - \hat{\theta}( m_0, a_1, \omega_1)\right) \\ 
     &= Pr(\omega_0 | s_1, a_1, \theta_L) \left(\hat{\theta}(m_0, a_1, \omega_0) - \hat{\theta}( m_1, a_1, \omega_0)\right).
\end{align}
If we solve for $\hat{\theta}( m_1, a_1, \omega_1)$ in \eqref{iclow} and plug this into the difference in the low-skill worker's payoff from reporting $m_1$ versus $m_0$ when $s=s_0$ we have:
\begin{align}\label{iclow2}
    \frac{ (1-\alpha ) \left(2 \upsilon
   _L-1\right)\left(\hat{\theta}(m_1,a_1,\omega_0) - \hat{\theta}(m_0,a_1,\omega_0)\right)}{\upsilon _L \left(\alpha -(2 \alpha -1) \upsilon _L\right)}.
\end{align}
If the equilibrium is informative, i.e., $\hat{\theta}(m_1,a_1,\omega_0) < \hat{\theta}(m_0,a_1,\omega_0)$, then \eqref{iclow2} is negative, which implies that the low-skill worker always reports $m_0$ when he his signal is $s_0$. We can then calculate the posteriors and plug them into \eqref{iclow}:
\begin{align}
    &Pr(\omega_1|s_1,a_1) \left(\frac{\upsilon _H}{\upsilon _H+p \upsilon _L} -\frac{1-\upsilon _H}{2-\upsilon _H-p \upsilon _L} \right) \\    
    &= Pr(\omega_0|s_1,a_1) \left(\frac{\upsilon _H}{1+\upsilon _H-p \left(1-\upsilon _L\right)} -  \frac{1-\upsilon _H}{1-\upsilon _H-p \left(1-\upsilon _L\right)}\right).
\end{align}
Further simplifying we have:
\begin{align}\label{iclow3}
    \frac{\frac{(1-\alpha ) \left(1-\upsilon _H-p\left(1-\upsilon
   _L\right)\right) \left(1-\upsilon _L\right)}{\left(1-\upsilon
   _H+p \left(1-\upsilon _L\right)\right) \left(1+\upsilon
   _H-p\left(1-\upsilon _L\right)\right)}+\frac{\alpha  \upsilon
   _L \left(\upsilon _H-p\upsilon _L\right)}{\left(2-\upsilon
   _H-p\upsilon _L\right) \left(\upsilon _H+p\upsilon
   _L\right)}}{1-\alpha +(2 \alpha -1) \upsilon _L} = 0.
\end{align}
However, the LHS of \eqref{iclow3} is strictly positive which means it is not incentive compatible for the low-skill worker to mix when $s=s_1$. 

Now consider the low-skill worker reporting $m_0$ when $s=s_1$. Let $p$ denote the probability the low-skill worker reports $m_1$ when $s= s_0$. We can then calculate the posteriors and plug them into the low-skill worker's incentive compatibility constraint when $s = s_1$:
\begin{align}
    &Pr(\omega_1|s_1,a_1) \left(\frac{\upsilon _H}{\upsilon _H+p \left(1-\upsilon _L\right)} -\frac{1-\upsilon _H}{2-\upsilon _H-p \left(1-\upsilon _L\right)} \right) \\    
    &\leq  Pr(\omega_0|s_1,a_1) \left(\frac{\upsilon _H}{1+\upsilon _H-p \upsilon _L} -  \frac{1-\upsilon _H}{1-\upsilon _H+p \upsilon _L}\right).
\end{align}
Further simplifying we have: 
\begin{align}\label{iclow4}
    \frac{\frac{\alpha  \upsilon _H \upsilon _L}{\upsilon _H+p
   \left(1-\upsilon _L\right)}-\frac{\alpha  \left(1-\upsilon _H\right)
   \upsilon _L}{2-\upsilon _H-p \left(1-\upsilon
   _L\right)}+\frac{(1-\alpha ) \left(1-\upsilon _H\right) \left(1-\upsilon
   _L\right)}{1-\upsilon _H+p \upsilon _L}-\frac{(1-\alpha )
   \upsilon _H \left(1-\upsilon _L\right)}{1+\upsilon _H-p
   \upsilon _L}}{1-\alpha +(2 \alpha -1) \upsilon _L} \leq 0.
\end{align}
However, the LHS of \eqref{iclow4} is strictly positive which contradicts the low-skill worker reporting $m_0$ when $s = s_1$. To see that \eqref{iclow4} is positive, we can differentiate the LHS of \eqref{iclow4} with respect to $p$ and we have:
\begin{align}
-\frac{\left(1-\upsilon _L\right) \upsilon _L \left(\frac{\alpha  \upsilon
   _H}{\left(\upsilon _H+p\left(1-\upsilon
   _L\right)\right){}^2}+\frac{\alpha  \left(1-\upsilon
   _H\right)}{\left(2-\upsilon _H-p\left(1-\upsilon
   _L\right)\right){}^2}+\frac{(1-\alpha ) \upsilon _H}{\left(1+\upsilon
   _H-p\upsilon _L\right){}^2}+\frac{(1-\alpha )
   \left(1-\upsilon _H\right)}{\left(1-\upsilon _H+p\upsilon
   _L\right){}^2}\right)}{1-\alpha +(2 \alpha -1) \upsilon _L} < 0
\end{align}
Hence, \eqref{iclow4} is decreasing in $p$. However, when $p = 1$ \eqref{iclow4} simplifes to:
\begin{align}
    \frac{\left(\alpha +\upsilon _L-1\right) \left(\upsilon _H+\upsilon
   _L-1\right)}{\left(1-\left(\upsilon _H-\upsilon _L\right){}^2\right)
   \left(1-\alpha +(2 \alpha -1) \upsilon _L\right)},
\end{align}
which is strictly positive. Hence, \eqref{iclow4} is violated. Therefore, the low-skill worker reports $m_1$ when $s = s_1$ and $a = a_1$. Because of symmetry, the same steps can be used to show the low-skill worker reports $m_0$ when $s= s_0$ and $a = a_0$. 
\end{proof}

\end{document}